\newtheorem{theorem}{Theorem}
\newtheorem{proposition}{Proposition}
\theoremstyle{definition}
\newtheorem{example}{Example}
\newtheorem{definition}{Definition}
\theoremstyle{remark}
\newtheorem{remark}{Remark}
\newcommand{\bb}[1]{\mathbf{#1}}
\tikzset{>=latex}
\definecolor{darkblue}{RGB}{0,101,204}
\definecolor{myorange}{RGB}{255,131,0}
\begin{document}
%
% paper title
% Titles are generally capitalized except for words such as a, an, and, as,
% at, but, by, for, in, nor, of, on, or, the, to and up, which are usually
% not capitalized unless they are the first or last word of the title.
% Linebreaks \\ can be used within to get better formatting as desired.
% Do not put math or special symbols in the title.
\title{A Game Theoretic Macroscopic Model of Bypassing at Traffic Diverges with Applications to Mixed Autonomy Networks}
%
%
% author names and IEEE memberships
% note positions of commas and nonbreaking spaces ( ~ ) LaTeX will not break
% a structure at a ~ so this keeps an author's name from being broken across
% two lines.
% use \thanks{} to gain access to the first footnote area
% a separate \thanks must be used for each paragraph as LaTeX2e's \thanks
% was not built to handle multiple paragraphs
%

\author{Negar Mehr, Ruolin Li, and~Roberto~Horowitz.\footnote{N. Mehr, R. Li and R. Horowitz are with the Department
		of Mechanical Engineering, University of California, Berkeley, Berkeley,
		CA, 94720 USA e-mails: negar.mehr@berkeley.edu, ruolin\_li@berkeley.edu, horowitz@berkeley.edu }}%

\maketitle

% As a general rule, do not put math, special symbols or citations
% in the abstract or keywords.
\begin{abstract}
Vehicle bypassing is known to negatively affect delays at traffic diverges. However, due to the complexities of this phenomenon, accurate and yet simple models of such lane change maneuvers are hard to develop. In this work, we present a macroscopic model for predicting the number of vehicles that bypass at a traffic diverge. We take into account the selfishness of vehicles in selecting their lanes; every vehicle selects lanes such that its own cost is minimized. We discuss how we model the costs experienced by the vehicles. Then, taking into account the selfish behavior of the vehicles, we model the lane choice of vehicles at a traffic diverge as a Wardrop equilibrium. We state and prove the properties of Wardrop equilibrium in our model. We show that there always exists an equilibrium for our model. Moreover, unlike most nonlinear asymmetrical routing games, we prove that the equilibrium is unique under mild assumptions. We discuss how our model can be easily calibrated by running a simple optimization problem. Using our calibrated model, we validate it through simulation studies and demonstrate that our model successfully predicts the aggregate lane change maneuvers that are performed by vehicles for bypassing at a traffic diverge. We further discuss how our model can be employed to obtain the optimal lane choice behavior of the vehicles, where the social or total cost of vehicles is minimized. Finally, we demonstrate how our model can be utilized in scenarios where a central authority can dictate the lane choice and trajectory of certain vehicles so as to increase the overall vehicle mobility at a traffic diverge. Examples of such scenarios include the case when both human driven and autonomous vehicles coexist in the network. We show how certain decisions of the central authority can affect the total delays in such scenarios via an example.  
\end{abstract}

% Note that keywords are not normally used for peerreview papers.
%\begin{IEEEkeywords}
%game theory, Wardrop equilibrium, vehicle bypassing, traffic diverge.
%\end{IEEEkeywords}

% For peer review papers, you can put extra information on the cover
% page as needed:
% \ifCLASSOPTIONpeerreview
% \begin{center} \bfseries EDICS Category: 3-BBND \end{center}
% \fi
%
% For peerreview papers, this IEEEtran command inserts a page break and
% creates the second title. It will be ignored for other modes.
%\IEEEpeerreviewmaketitle

\section{Introduction}\label{intro}
Huge delays and costs are incurred by travelers due to traffic congestion. Thus, it is of paramount importance to derive accurate models of traffic behavior, as such models can be used to analyze traffic networks to gain an insight on how the traffic conditions can be improved in urban and freeway networks. However, since traffic networks normally exhibit very complex behaviors, developing models that are both accurate and simple enough for analysis and traffic management purposes is nontrivial. Among various traffic phenomena, vehicle lane changes are known to significantly affect traffic congestion and delays~\cite{ahn2007freeway}. Thus, it is important to accurately model vehicle lane changes; however, modeling the lane change behavior of vehicles remains among the most difficult traffic flow phenomena, partly due to the fact that they are dependent on the drivers' decision making process.
Furthermore. it is difficult to characterize the negative effects of vehicle lane changes and bypassing on upstream traffic streams. The existing literature on modeling lane change behaviors is mostly divided into two categories: 1) modeling the microscopic lane change decision process of the vehicles, or 2) investigating and quantifying the macroscopic effect of aggregate lane change maneuvers on traffic conditions. 

In regards to the existing research in the first category, lane change behavior of vehicles was first systematically studied in~\cite{gipps1986model}, where a set of rules and conditions were developed under which a single vehicle was assumed to change its lane. The set of derived conditions were assumed to depend on vehicle microscopic parameters such as vehicle velocity and road segment parameters such as how much space is available in the neighboring lanes. Several other microscopic models were derived in~\cite{rorbech1976multilane,yeo2008oversaturated,lee2016probability}. A survey and review of such microscopic models is available in~\cite{zheng2014recent}. Aligned with these models, several car following models were proposed by researchers to model the vehicle microscopic behavior such as acceleration and lane change maneuvers. Examples of such works can be found in~\cite{rothery1992car, aycin1999comparison,kesting2007general}. %Due to the complexities of such models, lane change behavior of vehicles using car following models was studied in simulation studies~\cite{hidas2002modelling}. 
Recently, a game theoretic approach was used in~\cite{talebpour2015modeling} and~\cite{meng2016dynamic} to model the lane change behavior of a single vehicle, where the lane change decision was assumed to be taken by a vehicle for increasing its speed. In~\cite{kita1999merging}, a similar approach was utilized to mimic the behavior of drivers at traffic merges.

With the recent advances in autonomous vehicles technology, a large body of literature has been devoted on how to design and control autonomous vehicles based on these vehicle lane change microscopic models, such that the vehicles exhibit lane change behaviors that are safe, are also similar to the lane change decisions made by humans, and are optimal. In~\cite{dong2017intention}, the intention of drivers when merging to freeway lanes was estimated. In~\cite{ulbrich2013probabilistic}, a decision
making approach for performing lane changes in a
fully automated vehicle driving scenario was presented and evaluated. In~\cite{khan2014analyzing}, the requirements associated with an optimal lane change behavior were described where minimizing fuel consumption and travel time were considered as objectives. In~\cite{mccall2007lane}, computer vision techniques were utilized to infer lane change intents. 

In regard to the existing research on the macroscopic effect of lane changes, there has been a focus on how to quantify the negative effects of lane change maneuvers on upstream traffic congestion. In~\cite{laval2006lane}, lane changing vehicles were modeled as particles endowed with mechanical properties. %In this work, freeway sections that were upstream of diverges were considered, and the freeway was modeled as interacting streams which could be linked by these lane changing particles. 
The implications and applications of this model were discussed in~\cite{laval2007impacts}. In~\cite{coifman2005lane}, it was demonstrated via case studies that lane change maneuvers could lead to reductions in freeway capacity. In~\cite{jin2010kinematic}, the impacts of lane change behaviors were modeled via the introduction of lane changing intensity variables and modified macroscopic traffic flow fundamental diagrams. In~\cite{wu2017multi}, a stochastic lane change model was developed for capturing the system level lane changing characteristics. 

In this paper, we develop a novel model of the \emph{aggregate} lane change maneuvers taken by vehicles that perform \emph{bypasses} at traffic diverges. By performing a bypass, it is meant that a vehicle performs a lane change behavior to the lane that corresponds to its intendant route very close to the diverge often cutting in front of vehicles that have made the lane change maneuver far upstream of the diverge. 
We develop a model capable of predicting the fraction of vehicles that perform bypasses at a traffic diverge to take an exit link. We study vehicle bypassing at the \emph{macro} scale, where we predict the number of vehicles who will change their lanes in order to take an appropriate exit. In particular, given the number of vehicles who wish to take a certain exit, our model can predict how many vehicles will perfrom bypassing close to the diverge in order to take an exit. We assume that vehicles act selfishly, i.e. every vehicle decides on its route and lane choice such that its own cost of travel is minimized. We describe how such costs incurred by vehicles can be modeled. Since our focus is on developing a macroscopic fluid like model for the behavior of vehicles, we model the equilibrium that results from the selfishness of vehicles as a Wardropian equilibrium. We prove that our model always has an equilibrium, and further, we use a novel machinery to show that its equilibrium is unique under mild assumptions. We describe how our model can easily be calibrated by solving a mixed integer linear program, and show through simulation studies that our model yields promising results, as it can successfully predict the aggregate bypassing behavior of vehicles at a fork.

Our framework, albeit simple, provides a powerful tool for quantifying the optimal lane change behavior of vehicles at traffic diverges. Our model not only predicts the aggregate bypassing behavior of vehicles but can also be used to quantitatively analyze what the optimal lane choice of vehicles must be. Our model is particularly beneficial in scenarios when a central authority can route a fraction of vehicles such that the overall delay is minimized. For instance, in networks with mixed vehicle autonomy, autonomous vehicles might be routed by a central planner. We demonstrate how our model can be used in such scenarios for deciding on the bypassing behavior of the autonomous vehicles such that the resulting induced equilibrium has a minimally detrimental social cost. To the best of our knowledge, there is no such work in the literature. 

The organization of this paper is as follows. In Section~\ref{sec:model}, we describe our modeling framework. In Section~\ref{sec:eq_prop}, we state and prove the properties of our model. Simulation studies including model calibration and validation are described in Section~\ref{sec:model_val}. In Section~\ref{sec:soc_opt}, we describe how our model can be used for finding the optimal lane choice of vehicles at a traffic diverge. In section~\ref{sec:mixed_aut}, we describe the applications of our model to traffic networks with mixed autonomy. We conclude the paper and discuss future directions in Section~\ref{sec:future}.
%%%%%%%%%%%%%%%%%%%%%%%%%%%%%%%%%%%%%%%%%%%%%%%%%%%%%%%%%%%%%%%%%%%%%%%%%%%%%%%%

\section{The Model} \label{sec:model}

\begin{figure}
\centering
\begin{tikzpicture}
  \fill[gray!30] (0,1) [rounded corners = 10pt]-- (4,1) [rounded corners = 0pt]-- (7,2) -- (7,2/3) -- (5,0) -- (7,-2/3) -- (7,-2) [rounded corners = 3pt]-- (4,-1) [rounded corners = 0]-- (0,-1) -- (0,1);

  \draw[very thick,rounded corners=10pt] (0,1) -- (4,1) -- (7,2);
  \draw[very thick,rounded corners=10pt] (0,-1) -- (4,-1) -- (7,-2);
  \draw[very thick] (7,2/3) -- (5,0) -- (7,-2/3);

  \draw[very thick, dash pattern=on 5pt off 5pt, white] (0,0) -- (4.5,0);

  \node[scale=0.5,rotate=180] at (2,0.5) {\includegraphics{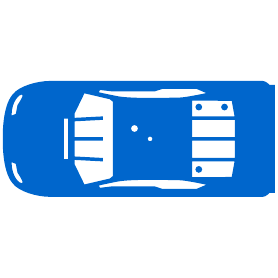}};
  \node[scale=0.5,rotate=200] at (3.5,-0.1) {\includegraphics{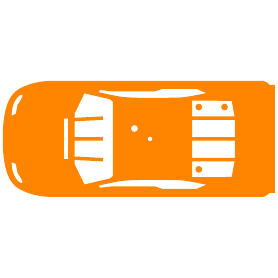}};

  \draw[thick, dashed, darkblue] (0,0.5)  .. controls (4,0.5) and (5,0.6) .. (7,1+1/3);
  \draw[thick, dashed, myorange] (0,-0.5)  .. controls (4,-0.5) and (6,1) .. (7,1+1/3);

  \node[darkblue,scale=0.9] at (0.7,0.75) {$x^s_1$};
  \node[myorange,scale=0.9] at (2,-0.65) {$x^b_1$};
  
  \node at (0.2,0.7) {I};
  \node at (0.2,-0.7) {II};
  \node at (6,1.3) {1};
  \node at (6,-1) {2};
  
\end{tikzpicture}
\caption{Example of a traffic diverge with two destination links 1 and 2. For exit link 1, a steadfast vehicle (blue car) constituting $x_1^s$ and a bypassing vehicle (orange car) forming $x_1^b$ are shown.}
\label{fig:diverge}
\end{figure}

We consider a traffic diverge where a link bifurcates into two links, which is a common scenario for freeway and arterial forks. We wish to study the bypassing behavior of vehicles in such diverges, where certain lanes correspond to a certain route or an exit link. Normally, in these scenarios, among the vehicles with the same target exit link, a fraction of vehicles choose to have the lanes that correspond to their exit, far upstream of the diverge, while the remaining fraction of vehicles perform bypasses and change to the lanes that correspond to their exit links at the distances that are very close to the diverge. More specifically, given the demand of vehicles for each possible exit link, our goal is to derive a macroscopic traffic model that can predict the fraction of vehicles which exhibit either of these two behaviors. Note that we wish to capture the aggregate bypassing behavior of vehicles in a macro scale rather than deriving the conditions under which a single vehicle decides to perform a lane change.

Let $I = \{1,2\}$ be the index set of the exit links at a fork diverge with two exit links (see Figure~\ref{fig:diverge}). Let $d_1$ and $d_2$ be the demands of vehicles that wish to take exit link $1$ and $2$ respectively. Additionally, we use $d = d_1 + d_2$  to represent the total demand of vehicles upstream of the diverge. We use $f_1 = \frac{d_1}{d}$ and $f_2 = \frac{d_2}{d}$ to represent the fraction of vehicles whose destinations are links $1$ and $2$. We describe our model in terms of these normalized flows rather than the actual flows since it simplifies our analysis. Let $F = \{ f_1, f_2\} $ be the normalized demand configuration which is the set of normalized demands for the diverge destinations. 
For each exit link $i \in I$, let $x^s_i$ denote the fraction of ``steadfast'' vehicles among $f_i$, which are the vehicles that take the lanes that correspond to their destination link $i$ far upstream of the diverge and remain on their lanes, whilst $x^b_i$ denotes the fraction of ``bypassing'' vehicles that choose to change their lanes to the lanes that correspond to their exit link $i$ at or at the vicinity of the diverge. Figure~\ref{fig:diverge} illustrates steadfast and bypassing vehicles that wish to take exit link 1. We assume that vehicles change their lanes only once, i.e. if a vehicle is in its target lane, it remains there. We let $\mathbf{x} = (x_i^s, x_i^b: i \in I)$ be the vector of steadfast and bypassing normalized flows for the two possible destinations of a fork. A normalized flow vector $\mathbf{x}$ is feasible for a given normalized demand configuration $F$ if it satisfies
\begin{gather}\label{eq:flow_cons}
f_i = x_i^s + x_i^b, \quad \forall i \in I, \\
x^s_i \geq 0, \; x^b_i \geq 0, \quad \forall i \in I.
\end{gather}
\begin{example}
Consider the diverge shown in Figure~\ref{fig:diverge}.
In this example, there are two upstream freeway lanes I and II which connect to exit links 1 and 2. For this diverge, $x_1^s$ is the fraction of vehicles that remain on lane I and take exit link 1, whereas $x_1^b$ is the fraction of vehicles that move along lane II and change to lane I at vicinity of the diverge.
\end{example}

For each exit link $i \in I$, we assume that all steadfast vehicles constituting $x_i^s$ experience the same travel cost. Likewise, all bypassing vehicles taking an exit link $i$, experience the same travel cost. For each destination link $i \in I$, we let $J_i^s$ and $J_i^b$ be the cost incurred on the vehicles forming $x_i^s$ and $x_i^b$ respectively. It is important to note that for each $i \neq j \in I$, $J_i^s$ or $J_i^b$ depends not only on 
$x_i^s$ and $x_b^a$ but also can depend on $x_j^s$ and $x_j^b$. 
%For instance, the cost incurred on the steadfast vehicles that wish to take the exit link $1$ depends not only on $x_1^s$ but also on $x_2^v$ which is the fraction of vehicles that share the road with $x_1^s$ until they finally change their lane to take the exit link $2$ at the diverge. This dependence arises from the fact that such vehicles share the road with $x_1^s$ until very close to the diverge. 
For each $i \neq j \in I$, we model the cost per unit of flow of the steadfast vehicles by
%\begin{align}\label{eq:J_i^s}
%J_i^s &= C_i^t \left(\beta_i x_i^s + x_j^v \right) + C_i^c x_i^v \left ( x_i^s + x_j^v \right),
%\end{align}
\begin{align}\label{eq:J_i^s}
J_i^s(\mathbf{x}) &= C_i^t \left( x_i^s + x_j^b \right) + C_i^c x_i^b \left ( x_i^s + x_j^b \right),
\end{align}
where $C_i^t$ and $C_i^c$ %$\beta_i$
are positive constants. The constant $C_i^t$ is the cost of traversing the lanes that connect to the exit $i$. Since $(x_i^s + x_j^b)$ is the total fraction of vehicles that traverse the lanes that connect to exit $i$, $(x_i^s + x_j^b)$ is multiplied by $C_i^t$ (e.g. a total of $x_1^s+x_2^b$ traverse link $I$ in Figure~\ref{fig:diverge}). It indicates that the more occupied the lanes that correspond to an exit are, the more expensive their traversal is due to the induced congestion.
On the other hand, the constant $C_i^c$ is used to reflect the negative cross effects caused by the lane change behavior of bypassing vehicles $x_i^b$. This term is used to mimic the fact that as the vehicles in $x_i^b$ bypass and change their lanes to take the exit $i$, they use the roads (resources) that join the exit $i$; thus, they will create delays for the vehicles that are already in those lanes. Note that since $x_i^s$ and $x_j^b$ both share the target link of $x^b_i$ up to the vicinity of the diverge, the total fraction of the vehicles present in the target lanes of $x_i^b$ is $( x_i^s + x_j^b )$. Hence, $C_i^c$ is multiplied by $( x_i^s + x_j^b )$ and $x_i^b$. This multiplication implies that the higher the number of vehicles that bypass $x_i^b$ is, or, the more occupied the lanes that join exit $i$ are, the larger the incurred cost will be.

%their contribution to the cost of traversing the lanes corresponding to $i$ is summed up. %The parameter $\beta_i$ is assumed to be always greater 1. This assumption is due to the fact that the vehicles in $x_i^s$ remain in their lanes whereas in $x_j^v$, vehicles leave their lanes earlier. Therefore, their contribution to the traversal cost is less. 
%Moreover, the negative effects caused by $x_j^v$ is reflected in both $x_i^s$ and $x_j^v$ through the second term in~\eqref{eq:J_i^s}. 

Now, we describe how we model the costs incurred on the bypassing vehicles. For each $i\neq j \in I$, we model $J_i^b$ via

%\begin{align}\label{eq:J_i^v}
%J_i^v &= C_j^t \left(x_j^s + x_i^v \right) + C_j^c x_j^v \left(  x_j^s + x_i^v \right) + C_i^d  \left ( x_i^s + x_j^v \right),
%\end{align}

\begin{align}\label{eq:J_i^v}
J_i^b(\mathbf{x}) &= C_j^t \left(x_j^s + \gamma_i x_i^b \right) + C_j^c x_j^b \left(  x_j^s + x_i^b \right) 
\end{align}
where $\gamma_i$ is a constant assumed to satisfy $\gamma_i \geq 1$, and $C_j^t$ and $C_j^c$ are as previously defined. If $\gamma_i = 1$, the cost function~\eqref{eq:J_i^v} will be similar to~\eqref{eq:J_i^s} for exit $j$, and $J_i^b$ would simply be equal to the cost of traversing the lanes that connect to exit $j$.  But, if $\gamma_i > 1$, the additional cost that the bypassing vehicles must pay due to traversing a longer path for joining their appropriate exit, is modeled. In fact, $\gamma_i > 1$ can model the cost incurred on bypassing vehicles due to the additional distance they need to traverse as well as the discomfort cost they will face for changing their lanes.

%The first two terms in fact correspond to  The second term in~\eqref{eq:J_i^v} capture the discomfort cost experienced by vehicles who change lane. Note that the first two term are very similar to $J^s_j$ except for the presence of $\beta_j$ (Equation~\eqref{eq:J_i^s}). The third term corresponds to the discomfort cost that one has to pay during the lane change. The intuition behind $C^d_i (x_i^s + x_j^v)$ is that $(x_i^s + x_j^v)$ is the total number of vehicles on the destination lanes; thus, the higher this number is, the more discomfort cost one has to pay in order to change lanes.  

\begin{example}
Consider the diverge shown in Figure~\ref{fig:diverge}. In this case, $x_1^s$ is the fraction of the vehicles that remain on lane I and take exit 1, whereas $x_2^b$ is the fraction of the vehicles that use lane I and leave lane I close to the diverge to take the exit 2. In this case, $J_1^s = C_1^t \left( x_1^s + x_2^b \right) + C_1^c x_1^b \left ( x_1^s + x_2^b \right)$. Note that $C_1^t \left( x_1^s + x_2^a \right)$ is the cost of traversing lane I, where $ \left( x_1^s + x_2^a \right)$ is the total fraction of vehicles present on lane I. Now, consider $J_1^b$ which is the cost of the bypassing vehicles that take exit 1 by remaining on lane II until they are close to the diverge, and then change their lane to take the exit I. 
For this type of vehicles, $J_1^b = C_2^t \left(x_2^s + \gamma_1 x_1^b \right) + C_2^c x_2^b \left(  x_2^s + x_1^b \right)$. In this case, $C_2^t \left(x_2^s + x_1^b \right) + C_2^c x_2^b \left(  x_2^s + x_1^b \right)$ is the cost of traversing lane II (up to vicinity of the diverge); whilst, the additional cost $C_2^t ((\gamma_1 - 1)x_1^b)$ is due to traversing the extra distance required for leaving lane II and finally joining exit I, as well as the discomfort cost the vehicles have to pay for changing their lanes from II to I.
\end{example}

We let $\mathbf{C} = (C_i^t, C_i^c, \gamma_i: i \in I) $ be the vector of cost coefficients in our model.  Before proceeding, we need to introduce the following definition. 
\begin{definition} \label{def:mon}
A function $h(.): \mathbb{R}^n \longrightarrow \mathbb{R}$ is called elementwise monotone if and only if for every $\mathbf{x}, \mathbf{x}' \in \mathbb{R}^n$ such that $\mathbf{x} \leq \mathbf{x}'$, where inequalities are interpreted elementwise, we have
\begin{align*}
h(\mathbf{x}) \leq h(\mathbf{x}').
\end{align*}
\end{definition}

\noindent Using Equations~\eqref{eq:J_i^s} and~\eqref{eq:J_i^v}, the following remark is evident.
\begin{remark}\label{rem:mon}
For each $i \in I$, the cost functions $J_i^s$ and $J_i^b$, are elementwise monotone in the sense of Definition~\ref{def:mon}.
\end{remark}

\noindent We will later use Remark~\ref{rem:mon} to guarantee certain properties of our model.

A reasonable and realistic assumption is that vehicles act \emph{selfishly}, i.e. each vehicle acts in a manner that minimizes its own cost. We now assume that each vehicle has two options: either to choose its appropriate lane upstream of the diverge or to perform a bypass and take its target exit close to the diverge via a ``tight'' lane change. Therefore, we model the lane choice of vehicles at a traffic diverge as an equilibrium. Thus, at equilibrium, if for an exit $i \in I$, both $x_i^s$ and $x_i^b$ are nonzero, we must have $J_i^s(\mathbf{x}) = J_i^b(\mathbf{x})$ since otherwise, vehicles will move to the lanes with lower cost. If either $x_i^s$ or $x_i^b$ is zero, then, its corresponding cost must be already larger than the cost of the one with nonzero flow. These conditions are called the Wardrop conditions~\cite{wardrop1952some} in the transportation literature. In order to describe the formal definition of Wardrop conditions, let $G = (F, \bb{C})$ be a tuple enclosing $F$ and $\bb{C}$ which are respectively the normalized demand configuration and the vector of cost coefficients. 
Then, in our setting, an equilibrium is defined via the following.
\begin{definition}
For a given $G = ( F,\mathbf{C})$, a flow vector $\mathbf{x}$ is an equilibrium if and only if for every exit link $i \in I$, we have:
\begin{subequations}
\label{eq:eq_def}
\begin{gather}
\begin{align}
x_i^s (J_i^s(\mathbf{x}) - J_i^b(\mathbf{x})) &\leq 0 ,\\
x_i^b (J_i^b(\mathbf{x}) - J_i^s(\mathbf{x})) &\leq 0. 
\end{align}
\end{gather}
\end{subequations}
%\begin{align} \label{eq:eq_def}
%x_i^s (J_i^s - J_i^v) &\leq 0 \\
%x_i^v (J_i^v - J_i^s) &\leq 0 
%\end{align}
\end{definition}
\noindent Note that Equations~\eqref{eq:eq_def} imply that for an exit link $i \in I$, if $x_i^s \neq 0$ and $x_i^b \neq 0$, then at equilibrium, we must have $J_i^s(\mathbf{x}) = J_i^b(\mathbf{x})$. Alternatively, if at equilibrium $x_i^s = 0$ ($x_i^b = 0$) , we have $J_i^s(\mathbf{x}) \geq J_i^b(\mathbf{x})$ $\left( J_i^b(\mathbf{x}) \geq J_i^s(\mathbf{x})\right)$. Note that the adoption of a Wardrop assumption implies that vehicles can be
treated infinitesimally, i.e. the change caused by the unilateral lane change of a single vehicle is negligible. This is in accordance with our goal
of modeling the macroscopic behavior of vehicles at diverges
%It is worth mentioning that our game setting can be viewed as an instance of a congestion game%game~\cite{nisan2007algorithmic} on the network graph $G$ shown in Figure~\ref{}
%with \emph{asymmetric} \emph{nonlinear} cost functions. %The mapping between our game and a congestion game is as follows. For each destination link $i \in I$, let $\PP_i$ be the set of available paths from the origin $o$ to the destination $i$
%There are two possible paths available for each $i$. For instance for $i = 1$, $\PP_1 = \{\}$  

%%%%%%%%%%%%%%%%%%%%%%%%%%%%%%%%%%%%%%%%%%%%%%%%%%%%%%%%%%%%%%%%%%%%%%%%%%%%%%%%

\section{Equilibrium Properties}\label{sec:eq_prop}

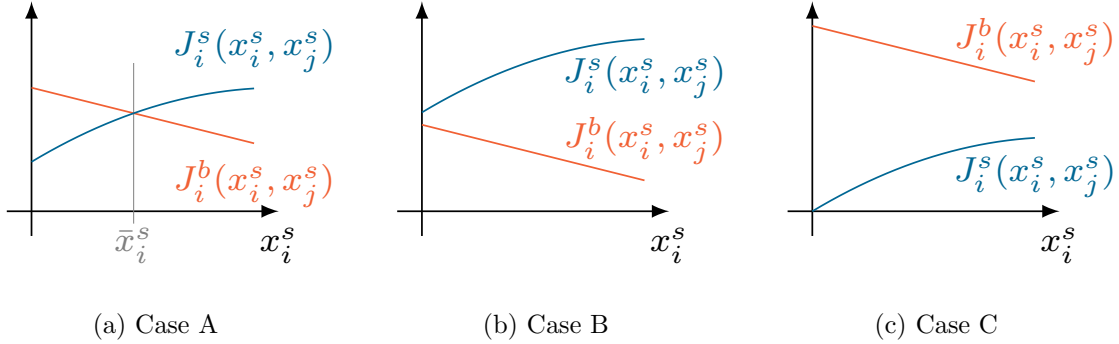
\begin{figure*}
	\centering
	\begin{subfigure}
    {0.3\textwidth} % width of left subfigure
    \centering
\resizebox{6cm}{!}{
\begin{tikzpicture}[domain=0:1.8]
\scalebox{0.8}{
    %\draw[very thin,color=gray] (-0.1,-1.1) grid (3.9,3.9);
    \draw[->] (-0.2,0) -- (2,0) node[below] {\scriptsize $x_i^s$};
    \draw[->] (0,-0.2) -- (0,1.7) node[above] {};
  
     \draw[very thin,color=gray]  (0.826,-0.1) -- (0.826,1.2) node[below] {};
    \draw[color=RedOrange,variable=\x]  plot({\x},{-0.25*\x + 1}) 
        node[below] {\scriptsize $J_i^b(x_i^s, x_j^s)$};
    \draw[color=MidnightBlue,variable=\x] plot ({\x},{-0.15*(\x-2)*(\x-2)+1}) 
        node[above] {\scriptsize $J_i^s(x_i^s, x_j^s)$};
        
        \node[below,color=gray]  at (0.826,0) {\scriptsize $\bar{x}_i^s$}; 
    %\draw[color=orange] plot[id=exp] function{0.05*exp(x)} 
     %   node[right] {$f(x) = \frac{1}{20} \mathrm e^x$};
}
\end{tikzpicture}}	
\caption{Case A} % subcaption
	\end{subfigure}
	%\vspace{1em} % here you can insert horizontal or vertical space 
    \quad
    \begin{subfigure}{0.3\textwidth}
    \centering
    \resizebox{6cm}{!}{
\begin{tikzpicture}[domain=0:1.8]
\scalebox{0.8}{
    %\draw[very thin,color=gray] (-0.1,-1.1) grid (3.9,3.9);
    \draw[->] (-0.2,0) -- (2,0) node[below] {\scriptsize $x_i^s$};
    \draw[->] (0,-0.2) -- (0,1.7) node[above] {};
  
     %\draw[very thin,color=gray]  (0.925,-0.1) -- (0.925,1.2) node[below] {};
    \draw[color=RedOrange,variable=\x] plot ({\x},{-0.25*\x + 0.7} )
        node[above] {\scriptsize $J_i^b(x_i^s, x_j^s)$};
    \draw[color=MidnightBlue,variable=\x] plot ({\x},{-0.15*(\x-2)*(\x-2)+1.4}) 
        node[below] {\scriptsize$J_i^s(x_i^s, x_j^s)$};
}   
\end{tikzpicture}}
\caption{Case B}
    \end{subfigure}
    \quad
    	\begin{subfigure}{0.3\textwidth} % width of right subfigure
    \centering
		\resizebox{6cm}{!}{
\begin{tikzpicture}[domain=0:1.8]
\scalebox{0.8}{
    %\draw[very thin,color=gray] (-0.1,-1.1) grid (3.9,3.9);
    \draw[->] (-0.2,0) -- (2,0) node[below] {\scriptsize $x_i^s$};
    \draw[->] (0,-0.2) -- (0,1.7) node[above] {};
  
    \draw[color =RedOrange,variable=\x] plot ({\x},{-0.25*\x + 1.5})
    node[above] {\scriptsize $J_i^b(x_i^s, x_j^s)$};
    
    \draw[color=MidnightBlue,variable=\x] plot ({\x},{-0.15*(\x-2)*(\x-2)+0.6} )
        node[below] {\scriptsize $J_i^s(x_i^s, x_j^s)$};
} 
\end{tikzpicture}}
		\caption{Case C} % subcaption
	\end{subfigure}
	\caption{Three possible configurations of $J_i^s(.)$ and $J_i^b(.)$.} % caption for whole figure
    \label{fig:three_conf}
\end{figure*}

In this section, we state the properties of the equilibrium of our model including its existence and uniqueness.
\subsection{Equilibrium Existence}
Using the Existence Theorem in~\cite{braess1979existence} for the setting of our model, we can conclude that there always exists at least one equilibrium for a given $G = (F,\bb{C})$ if the following holds.

\begin{proposition}
Given $G = (F,\mathbf{C})$ for a traffic diverge, if the cost functions $J_i^s(\bb{x}), J_i^b(\bb{x}), i \in I$ are continuous and elementwise monotone in $\mathbf{x}$, then, there exists at least one Wardrop equilibrium for $G$. 
\end{proposition}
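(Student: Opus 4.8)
The plan is to recast the Wardrop conditions~\eqref{eq:eq_def} as a finite-dimensional variational inequality over a compact convex set and then invoke a standard fixed-point argument (equivalently, the existence theorem of~\cite{braess1979existence} specialized to our two-exit diverge). First I would fix the normalized demand configuration $F$ and let
\begin{align*}
K = \{\, \mathbf{x} = (x_i^s,x_i^b : i\in I) \ :\ x_i^s + x_i^b = f_i,\ x_i^s\ge 0,\ x_i^b\ge 0,\ \forall i\in I \,\}
\end{align*}
be the feasible set defined by~\eqref{eq:flow_cons}. It is nonempty (take $x_i^s=f_i$, $x_i^b=0$), bounded, closed, and convex: indeed $K$ is the Cartesian product over $i\in I$ of the segments $\{(x_i^s,x_i^b)\ge 0 : x_i^s+x_i^b=f_i\}$, each of which is a $1$-simplex (a single point when $f_i=0$). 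Collect the costs into the vector field $\mathbf{J}(\mathbf{x}) = (J_i^s(\mathbf{x}),J_i^b(\mathbf{x}) : i\in I)$, which is continuous on $\mathbb{R}^4$ by hypothesis (and in fact polynomial, by~\eqref{eq:J_i^s}--\eqref{eq:J_i^v}).

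The key step is an equivalence lemma: $\mathbf{x}\in K$ satisfies~\eqref{eq:eq_def} for every $i\in I$ if and only if $\mathbf{x}$ solves the variational inequality $\langle \mathbf{J}(\mathbf{x}), \mathbf{y}-\mathbf{x}\rangle \ge 0$ for all $\mathbf{y}\in K$. For the ``only if'' direction I would use $y_i^s+y_i^b=x_i^s+x_i^b$ to rewrite the $i$-th summand of $\langle \mathbf{J}(\mathbf{x}),\mathbf{y}-\mathbf{x}\rangle$ as $(J_i^s(\mathbf{x})-J_i^b(\mathbf{x}))(y_i^s-x_i^s)$, then observe that the sign of $y_i^s-x_i^s$ forces $x_i^s>0$ (resp.\ $x_i^b>0$) via the nonnegativity constraints, so~\eqref{eq:eq_def} makes each summand nonnegative. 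For the converse, perturbing $\mathbf{x}$ by shifting an amount $\varepsilon\in(0,x_i^s]$ from $x_i^s$ to $x_i^b$ (and symmetrically) keeps the perturbed vector in $K$ and, dividing the variational inequality by $\varepsilon$, yields exactly~\eqref{eq:eq_def}. Note that this equivalence relies only on the flow-conservation structure and nonnegativity, not on monotonicity.

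Finally I would conclude existence of a solution of the variational inequality, hence of a Wardrop equilibrium. The quickest route is Brouwer's theorem applied to the self-map $\mathbf{x}\mapsto \mathrm{proj}_K\big(\mathbf{x}-\mathbf{J}(\mathbf{x})\big)$ of the compact convex set $K$: this map is continuous (continuity of $\mathbf{J}$ composed with the nonexpansive Euclidean projection onto $K$), its fixed points are precisely the solutions of the variational inequality, and Brouwer guarantees at least one. Equivalently, the hypotheses of the existence theorem of~\cite{braess1979existence} are satisfied in our setting -- the feasible region is compact and convex, and the cost functions are continuous and (by Remark~\ref{rem:mon}) elementwise monotone -- which immediately yields the claim.

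I do not anticipate a genuine obstacle here: existence follows from continuity and compactness of the feasible set alone, with elementwise monotonicity playing no essential role (it becomes relevant later, for uniqueness). The only points requiring care are the degenerate case $f_i=0$, where the $i$-th segment collapses to a point and~\eqref{eq:eq_def} holds vacuously, and the bookkeeping in the equivalence lemma that links the componentwise Wardrop inequalities to the aggregate variational inequality.
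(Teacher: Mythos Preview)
Your argument is correct, and at its core it coincides with what the paper does: the paper does not give a proof at all but simply invokes the existence theorem of~\cite{braess1979existence}, exactly as you do in your final paragraph. Your self-contained variational-inequality/Brouwer route is an additional, more explicit justification that the paper omits; it is sound, and your observation that elementwise monotonicity is inessential for existence (only continuity and compactness of $K$ are used) is accurate and worth keeping.
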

\begin{remark}

For a diverge with two exit links, using Remark~\ref{rem:mon} and continuity of $J_i^s(.)$ and $J_i^b(.)$, we can conclude that there always exists at least one equilibrium for every $G =(F,\mathbf{C}) $. 
\end{remark}

\subsection{Equilibrium Uniqueness}
Once the existence of Wardrop equilibrium is established, it is important to study its uniqueness. Equilibrium uniqueness is a desired and favorable property of a model since the social cost at equilibrium is well defined when there exits only one equilibrium. In this subsection, we show that our model has this favorable property. Equations~\eqref{eq:J_i^s} and~\eqref{eq:J_i^v} indicate that $J_i^s(.)$ depends not only on $x_i^s$ but also on $x_i^b$ and $x_j^b$. In the routing games literature, this dependence is referred to as the cost functions being \emph{nonseparable}~\cite{correa2008geometric}. This nonseparability is further asymmetric, meaning that the incurred costs are not the same across $J_i^s$ and $J_i^b$ for each $i \in I$. In addition to their asymmetric nonseparability, the cost functions in~\eqref{eq:J_i^s} and~\eqref{eq:J_i^v} are nonlinear. It is known that generally under such settings, Wardrop equilibria exhibit very complicated behavior including nonuniqueness. Equilibrium uniqueness is generally only achieved under very strong assumptions, which do not hold in the majority of applications~\cite{altman2006survey}. Despite this complication, and the fact that none of the existing results in the literature on sufficient conditions for the uniqueness of an equilibrium can be applied to our model, we are able to obtain the conditions under which a given $G = (F,\mathbf{C})$ is guaranteed to have a unique equilibrium in our model. 

To prove uniqueness, we first define an auxiliary game such that there exists a connection between the Wardrop equilibrium in our model and Nash equilibrium of the related auxiliary game. For any given $G = (F,\mathbf{C})$, we define a two player game $\tilde{G} = \langle P, A, (\tilde{J}_p: p \in P) \rangle$, where $P = \{ 1,2\}$ is the set of players. Since both $I$ and $P$ are the set $\{1,2\}$, we use a bijective correspondence between every $p \in P$ and $i \in I$. In fact, $p=1$ ($p=2$) implies that $i = 1$ ($i=2$) and vice versa. Therefore, every exit link $i \in I$ is associated with a player $p \in P$. In the auxiliary game that will be defined subsequently, $A = A_1 \times A_2$ is the action space, where for each $p\in P$, $A_p = [0,f_p] = [0, f_i]$ is the action set of player $p$. Moreover, $\tilde{J}_p$ is the cost associated with each player $p \in P$. We let $\bb{y} = (y_p, p \in P)$ be the vector of actions taken by the two players of the game $\tilde{G}$. To make a  connection between our traffic diverge setting and the defined auxiliary game, for a given flow vector $\mathbf{x}$, we define $\bb{y} $ to be
\begin{align}
\bb{y} = (x_i^s, i \in I).
\end{align}
\noindent Then, for every $ p \in P$, we define $\tilde{J}_p(\bb{y})$ to be
\begin{align}\label{eq:nash_cost}
\tilde{J}_p(\mathbf{y}) = \left( J_i^s(\bb{x}) - J_i^b(\bb{x})\right)^2.
\end{align}
In the auxiliary game $\tilde{G}$, a vector $\bb{y} = (y_p, y_{p'})$ is a pure Nash equilibrium if and only if
\begin{align}\label{eq:nash}
\forall\, p, p' \in P,\; y_p &= B_p(y_{p'})\ \\
&= \text{argmin}_{y_p \in [0,f_p]} \tilde{J}_p(y_p, y_{p'})
\end{align}
where $B_p$ is the best response function of player $p$. Note that since for every player $p\in P$, $\tilde{J}(\bb{y})$ is a continuous function on a closed interval, a minimum is achieved. Equation~\eqref{eq:nash} implies that if $y_{p'}$ is fixed, player $p$ takes the best possible action that minimizes its own cost $\tilde{J}_p(\bb{y})$. The following proposition establishes the connection between the Wardrop equilibrium of $G$ and the Nash equilibrium of $\tilde{G}$.

\begin{proposition}\label{prop:two-person}
A flow vector $\mathbf{x} = (x_i^s, x_i^b: i \in I)$ is a Wardrop equilibrium for $G = (F,\bb{C})$ if and only if $\bb{y}= (x^s_i, i \in I)$ is a pure Nash equilibrium for $\tilde{G}$ provided that
\begin{align}\label{eq:incr_dec}
C_i^t \geq C_i^c,\; \forall i \in I.
\end{align}
\end{proposition}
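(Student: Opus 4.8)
The plan is to reduce the claimed equivalence to a one-variable statement, one exit link at a time. Fix a flow vector $\mathbf{x}$; for the exit link $i\in I$ associated with player $p$, freeze the coordinates $x_j^s=y_{p'}$ and $x_j^b=f_j-y_{p'}$ belonging to the other player, and define $g_i:[0,f_i]\to\mathbb{R}$ by letting $g_i(t)$ be the value of $J_i^s(\mathbf{x})-J_i^b(\mathbf{x})$ at the flow with $x_i^s=t$, $x_i^b=f_i-t$, and the other two coordinates frozen. By~\eqref{eq:nash_cost} we have $\tilde{J}_p(t,y_{p'})=g_i(t)^2$, so the statement $y_p=B_p(y_{p'})$ says exactly that $y_p$ minimizes $g_i(\cdot)^2$ over $[0,f_i]$. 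On the other hand, substituting $x_i^b=f_i-x_i^s$ into the equilibrium conditions~\eqref{eq:eq_def}, the Wardrop condition for exit $i$ becomes the pair of inequalities $y_p\,g_i(y_p)\le 0$ and $(f_i-y_p)\,g_i(y_p)\ge 0$. Since $\mathbf{x}$ is a Wardrop equilibrium of $G$ exactly when these hold for both $i\in I$, and $\mathbf{y}$ is a pure Nash equilibrium of $\tilde{G}$ exactly when $y_p$ minimizes $g_i(\cdot)^2$ for both $p\in P$, it suffices to show, for each $i$, that the two Wardrop inequalities at $y_p$ are equivalent to $y_p$ being a minimizer of $g_i(\cdot)^2$ on $[0,f_i]$.

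The key structural fact, and the only place where the hypothesis $C_i^t\ge C_i^c$ is used, is that it forces $g_i$ to be \emph{strictly increasing} on $[0,f_i]$. I would obtain this by differentiating~\eqref{eq:J_i^s} and~\eqref{eq:J_i^v} with respect to $t=x_i^s$, keeping in mind that $x_i^b=f_i-t$ varies with $t$, which yields
\begin{align*}
g_i'(t)=C_i^t+\gamma_i C_j^t+C_j^c x_j^b+C_i^c\bigl(x_i^b-x_i^s-x_j^b\bigr).
\end{align*}
Since $x_i^s\le f_i$ and $x_j^b\le f_j$ with $f_i+f_j=1$, the last parenthesis is at least $-(x_i^s+x_j^b)\ge -1$, hence $g_i'(t)\ge (C_i^t-C_i^c)+\gamma_i C_j^t+C_j^c x_j^b\ge \gamma_i C_j^t>0$, using $C_i^t\ge C_i^c$, $\gamma_i\ge 1$, and the positivity of the cost coefficients. (If $f_i=0$ the claim is immediate: then $x_i^s=x_i^b=0$, both Wardrop inequalities hold trivially, and $y_p=0$ is the only admissible action.)

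With $g_i$ continuous and strictly increasing on $[0,f_i]$, I would conclude by a three-way case split on the signs of $g_i(0)$ and $g_i(f_i)$. If $g_i(0)\ge 0$, then $g_i\ge 0$ on the whole interval, $g_i(\cdot)^2$ attains its minimum uniquely at $t=0$, and a direct sign check shows the two Wardrop inequalities at $y_p$ hold iff $y_p=0$. If $g_i(f_i)\le 0$, the mirror argument shows that both sides are equivalent to $y_p=f_i$. In the remaining case $g_i(0)<0<g_i(f_i)$, so $g_i$ has a unique zero $r\in(0,f_i)$, which is the unique minimizer of $g_i(\cdot)^2$, and the sign check again shows the Wardrop inequalities at $y_p$ force $y_p=r$. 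In each case the Wardrop condition for exit $i$ is equivalent to $y_p$ being a best response to $y_{p'}$; intersecting these equivalences over $i=1,2$ gives the asserted correspondence between Wardrop equilibria of $G$ and pure Nash equilibria of $\tilde{G}$.

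The step I expect to be the main obstacle is the monotonicity of $g_i$: one must carry out the differentiation carefully --- the coupling $x_i^b=f_i-x_i^s$ is precisely what produces the $-2C_i^c x_i^s$ contribution --- and then notice that the combination $C_i^t\ge C_i^c$ together with the normalization $f_i+f_j=1$ is exactly what keeps $g_i'$ strictly positive throughout the feasible interval. Once that is established the case analysis is routine, though some care is warranted at the boundary: for instance $g_i(0)=0$ falls into the first case, and it is the strict monotonicity of $g_i$ (rather than mere continuity) that makes the minimizer of $g_i(\cdot)^2$ unique in every case, so that the correspondence between the two equilibrium notions is exact.
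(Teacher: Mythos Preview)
Your proposal is correct and follows essentially the same route as the paper's proof: establish monotonicity in $x_i^s$ under the hypothesis $C_i^t\ge C_i^c$, then run a three-case analysis according to where $J_i^s$ and $J_i^b$ meet, verifying in each case that the Wardrop inequalities for exit $i$ coincide with $y_p$ being a best response. The only packaging difference is that the paper proves $J_i^s$ increasing and $J_i^b$ decreasing in $x_i^s$ separately, whereas you work directly with their difference $g_i$ and show $g_i'>0$; your single derivative bound is a bit more economical but the content is the same.
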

\begin{proof}
First note that given the normalized demand configuration $F = \{f_1, f_2\}$, flow conservation requires that for every exit link $i \in I$, we have $x_i^b = f_i - x_i^s$. Thus, with a little abuse of notation, $J_i^s(\bb{x})$ and $J_i^b(\bb{x})$ can be written as $J_i^s(x_i^s, x_j^s)$ and $J_i^b(x_i^s, x_j^s)$ for every pair of exit links $i \neq j \in I$. For a given $x_j^s$, we show that~\eqref{eq:incr_dec} is a sufficient condition for $J_i^s(x^s_i, x^s_j)$ to be increasing in $x^s_i$, and $J_i^b(x^s_i, x^s_j)$ be decreasing in $x^s_i$ . To see this, note that for every $i \neq j \in I$, we have:
\begin{align}\label{eq:jll_der}
\frac{\partial J^s_i (x_i^s, x_j^s)}{\partial x_i^s} = -2 C_i^c x_i^s + C_i^t + C_i^c x^s_i - C_i^c (f_j - x_j^s).
\end{align}
Equation~\eqref{eq:jll_der} is linear in $x^s_i$. Moreover, for each $i \in I$, $x^s_i$ is allowed to only take values in interval $[0,f_i]$. Therefore, in order to obtain sufficient conditions for the positivity of~\eqref{eq:jll_der}, it is sufficient to guarantee that $\frac{\partial J^s_i (x_i^s, x_j^s)}{\partial x_i^s}$ is positive at all possible extreme points $(x^s_i,x^s_j )$ which are $\{\left( 0, 0\right), \left( f_1,  0 \right), \left( 0, f_2\right),\left( f_1, f_2\right) \}$. Using the fact that the demand functions must satisfy $f_1 + f_2 = 1$, it is easy to verify that the smallest possible value of~\eqref{eq:jll_der} is attained in $(f_1,0)$ when $f_1 = 1$. At the point $(1,0)$, we have $\frac{\partial J^s_i}{\partial x_i^s}(1,0) = C_i^t - C_i^c$. Therefore,~\eqref{eq:incr_dec} is a sufficient condition for $J_i^s (x_i^s, x_j^s)$ to be increasing in $x_i^s$. Similarly, we can compute $\frac{\partial J^b_i (x_i^s, x_j^s)}{\partial x_i^s}$ which is
\begin{align}
\frac{\partial J^b_i (x_i^s, x_j^s)}{\partial x_i^s} &= -C^t_{j}\gamma_j - C_j^c(f_j-x_j^s).
\end{align}
Since $(f_j-x_j^s)$ is always greater than or equal to zero, clearly, for every $i\neq j \in I$, $J_i^b (x_i^s, x_j^s)$ is always decreasing in $x_i^s$ for any given $ x_j^s$.

Now, we can proceed to proving that under~\eqref{eq:incr_dec}, every Wardrop equilibrium of $G$ is equivalent to a Nash equilibrium of the auxiliary game $\tilde{G}$. Consider best response function $B_p(y_{p'})$ in~\eqref{eq:nash}. For a given $y_{p'}=x_j^s$, in order to minimize $\tilde{J}_p(y_p, y_{p'})$ over $y_p = x_i^s$, since $ J^s_i$ is increasing, and $ J^b_i$  is decreasing in $x_i^s$ under~\eqref{eq:incr_dec}, the following three scenarios may occur for a given $x_j^s$ (see Figure~\ref{fig:three_conf}):

\begin{itemize}
\item Case A: $J_i^s(x_i^s, x_j^s)$ and $J_i^b(x_i^s, x_j^s)$ have an intersection  on the interval $(0,f_p)$. In this case, there exists a point ${\bar{x}^s}_i(x_j^s) \in (0, f_p)$ such that $J_i^s(\bar{x}_i^s, x_j^s) = J_i^b(\bar{x}_i^s, x_j^s)$ (See Figure~\ref{fig:three_conf}, case A). Using~\eqref{eq:nash_cost}, it can also be verified that in this case, the intersection point $y_p = \bar{x}^s_i$ is the best response for a given $y_{p'} = x _j^s$. If this is the case, Equations~\eqref{eq:eq_def} are also satisfied by $\bar{x}_i^s$ for a given $x_j^s$. It is easy to see that the reverse is also true. Indeed, if $\bar{x}_{i}^s \in (0,f_i)$ satisfies~\eqref{eq:eq_def} for a given $x_{j}^s$, then, $\bar{x}_{i}^s$ must be the intersection of $J_i^s(x_i^s, x_{j}^s)$ and $J_i^b(x_i^s, x_{j}^s)$ on the interval $(0,f_i)$. Therefore, $y_p = x_i^s$ is the best response of $y_{p'} = x_j^s$.

\item Case B: $J_i^s(x_i^s, x_j^s)$ and $J_i^b(x_i^s, x_j^s)$ do not intersect on the interval $(0,f_p)$, and $J_i^s(0, x_j^s) \geq J_i^b(0, x_j^s)$ for a given $x_j^s$. In this case, if $y_{p'} = x_j^s$, then ${y}_p = B_p(y_{p'}) = 0$ (See Figure~\ref{fig:three_conf}, case B). It is easy to see that, $x_i^s = 0$ satisfies~\eqref{eq:eq_def} for a given $x_{j}^s$ since if $x_i^s = 0$, then, $x_i^b = f_i$ while $J^s_i \geq J^b_i$. The reverse is also true, if $x_i^s = 0$ satisfies~\eqref{eq:eq_def} for a given $x_j^s$, then $y_p = x_p^s = 0$ is the best response of ${y}_{p'} = x_j^s$.

\item Case C: $J_i^s(x_i^s, x_j^s)$ and $J_i^b(x_i^s, x_j^s)$ do not intersect on the interval $(0,f_p)$, and $J_i^s(0, x_j^s) \leq J_i^b(0, x_j^s)$. In this case, if $y_{p'} = x_j^s$, then $y_p = B_p(y_{p'}) = 1$. Similar to case B, one can conclude that if $y_{p'} = x_j^s$, then $y_p = {x}_i^s = 1$ is equal to $B_p(y_{p'})$ if and only if ${x}_i^s= 1$ satisfies~\eqref{eq:eq_def} for a given $x_j^s$.
\end{itemize}
So far, we have shown that for every
$p\neq p' \in P$, for a given $y_{p'}$, $y_p$ is the best response of $y_{p'}$ if and only if $\mathbf{x} = (y_p, f_p-y_p: i \in I)$ satisfies~\eqref{eq:eq_def}. Therefore, $\bb{y} = (x_i^s, i \in I)$ is a Nash equilibrium of $\tilde{G}$ if and only if $\mathbf{x} = (x_i^s, f_i-x_i^s)_{i \in I}$ is a Wardrop equilibrium of $G$.
\end{proof}

\begin{remark}\label{rem:eq_inter}
Notice that using the three cases described in the proof of Proposition~\ref{prop:two-person}, for a given $y_{p'} = x_j^s$, the best response $B_p(y_{p'})$ can be found by first intersecting $J_i^s(x_i^s, x_j^s)$ and $J_i^b(x_i^s, x_j^s)$ and then projecting the intersection point $\bar{x}^s_i(x_j^s)$ onto the interval $[0,f_i]$. We will use this fact in the remainder to prove equilibrium uniqueness. 
\end{remark}

Having Proposition~\ref{prop:two-person} in mind, we are ready to state and prove the following.
\begin{theorem}\label{the:uniq}
For a given diverge $G = (F, \bb{C})$, a Wardrop equilibrium flow vector $\mathbf{x}$ is unique if
\begin{align}
C_i^t &\geq C_i^c,\; \forall i \in I, \label{eq:the_nash_cond}\\
(\gamma_i - 1) C_j^t &\geq C_i^c,\; \forall i \in I \label{eq:small_der}.
\end{align}
\end{theorem}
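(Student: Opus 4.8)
The plan is to reduce uniqueness of the Wardrop equilibrium to uniqueness of a fixed point of the best‑response map of $\tilde{G}$, and then to establish that map is contractive using~\eqref{eq:small_der}. Since the hypothesis~\eqref{eq:incr_dec} of Proposition~\ref{prop:two-person} is exactly~\eqref{eq:the_nash_cond}, Proposition~\ref{prop:two-person} applies: $\mathbf{x}=(x_i^s,x_i^b:i\in I)$ is a Wardrop equilibrium of $G$ iff $\mathbf{y}=(x_i^s:i\in I)$ is a pure Nash equilibrium of $\tilde{G}$, and by Remark~\ref{rem:eq_inter} the latter means $y_1=B_1(y_2)$, $y_2=B_2(y_1)$, where $B_p(x_j^s)$ is the projection onto $[0,f_i]$ of the real root $\bar{x}_i^s(x_j^s)$ of $J_i^s(x_i^s,x_j^s)=J_i^b(x_i^s,x_j^s)$. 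That root is well defined and (by the implicit function theorem, since $\partial_{x_i^s}(J_i^s-J_i^b)>0$ was shown in the proof of Proposition~\ref{prop:two-person}) a smooth function of $x_j^s$. Existence of an equilibrium being already known, it remains to show this fixed‑point system has at most one solution.

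The heart of the argument is a slope bound for the best‑response curves. Writing $x_i^b=f_i-x_i^s$, $x_j^b=f_j-x_j^s$, $\Phi_i=J_i^s-J_i^b$, one has $\frac{d\bar{x}_i^s}{dx_j^s}=-\frac{\partial_{x_j^s}\Phi_i}{\partial_{x_i^s}\Phi_i}$. First I would pin down the sign: $\partial_{x_j^s}J_i^s=-(C_i^t+C_i^c x_i^b)\le 0$, while $\partial_{x_j^s}J_i^b=C_j^t+C_j^c(x_j^b-x_j^s-x_i^b)\ge C_j^t-C_j^c\ge 0$ by~\eqref{eq:the_nash_cond}, so $\partial_{x_j^s}\Phi_i\le 0$ and hence $0\le \frac{d\bar{x}_i^s}{dx_j^s}$. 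For the upper bound I would compute the combination $\partial_{x_i^s}\Phi_i+\partial_{x_j^s}\Phi_i=-C_i^c(x_i^s+x_j^b)+(\gamma_i-1)C_j^t+C_j^c(x_j^s+x_i^b)$; since $(x_i^s+x_j^b)+(x_j^s+x_i^b)=f_i+f_j=1$, this equals $(\gamma_i-1)C_j^t+C_j^c-(C_i^c+C_j^c)(x_i^s+x_j^b)$, which by~\eqref{eq:small_der} is $\ge (\gamma_i-1)C_j^t-C_i^c\ge 0$, and is strictly positive unless $x_i^s+x_j^b=1$, i.e.\ $x_i^s=f_i$ and $x_j^s=0$. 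Thus $\frac{d\bar{x}_i^s}{dx_j^s}\le 1$, with equality possible only at the single abscissa $x_j^s=0$. Integrating, $|\bar{x}_i^s(a)-\bar{x}_i^s(b)|<|a-b|$ for $a\ne b$, and composing with the nonexpansive projection onto $[0,f_i]$ shows each $B_p$ is nondecreasing and strictly nonexpansive: $|B_p(a)-B_p(b)|<|a-b|$ whenever $a\ne b$.

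Finally I would conclude with a standard contraction argument. The composition $B_1\circ B_2$ is a continuous self‑map of $[0,f_1]$, and it is strictly nonexpansive: if $B_2(a)=B_2(b)$ the left‑hand side vanishes, and otherwise $|B_1(B_2(a))-B_1(B_2(b))|<|B_2(a)-B_2(b)|<|a-b|$. A strictly nonexpansive self‑map of a (compact) interval has at most one fixed point — two distinct fixed points $p\ne q$ would give $|p-q|=|B_1(B_2(p))-B_1(B_2(q))|<|p-q|$ — so $y_1^\star$ is unique, $y_2^\star=B_2(y_1^\star)$ is then determined, and Proposition~\ref{prop:two-person} returns a unique $\mathbf{x}$. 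I expect the main obstacle to be the upper slope bound: one has to identify the right linear combination $\partial_{x_i^s}\Phi_i+\partial_{x_j^s}\Phi_i$ and use both the demand normalization $f_i+f_j=1$ and~\eqref{eq:small_der} together (note that~\eqref{eq:the_nash_cond} is used twice — to invoke Proposition~\ref{prop:two-person} and for the sign of $\partial_{x_j^s}\Phi_i$). A secondary subtlety is that the slope can equal $1$ at the boundary value $x_j^s=0$, so strict contractivity must be recovered by the integration argument rather than a pointwise estimate.
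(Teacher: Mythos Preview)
Your proposal is correct and follows essentially the same route as the paper: reduce via Proposition~\ref{prop:two-person} to the two-player auxiliary game, characterize Nash equilibria as fixed points of the composed best responses, and bound the slope of each $B_p$ in $[0,1]$ by implicit differentiation, using exactly the combination $\partial_{x_i^s}\Phi_i+\partial_{x_j^s}\Phi_i\ge 0$ together with $f_1+f_2=1$ and~\eqref{eq:small_der}. Your treatment is in fact slightly more careful than the paper's at the boundary where the slope may equal $1$, recovering uniqueness through the strict-nonexpansiveness/integration argument rather than asserting a strict pointwise bound.
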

\begin{proof}
Construct the auxiliary game $\tilde{G} = \langle P, A, (\tilde{J}_p, p \in P) \rangle$ described above. Using Proposition~\ref{prop:two-person}, we know that if~\eqref{eq:the_nash_cond} holds, $\bb{x}$ is a Wardrop equilibrium for $G$ if and only if $(y_p:p \in P) = (x^s_i : i \in I)$ is a Nash equilibrium for $\tilde{G}$. We now prove that under~\eqref{eq:small_der}, $\tilde{G}$ has a unique equilibrium; thus, $G$
 must also have a unique equilibrium. To see this, note that $\bb{y} = (x_i^s: i \in I)$ is a Nash equilibrium for $\tilde{G}$ if and only if for every $p \neq p'$, $y_p = B_p(y_{p'}),$ and  $y_{p'} = B_{p'}(y_p)$. %$(\bar{x}_p^s)_{p \in P}$ is Nash equilibrium for $\tilde{G}$ if and only if
 These conditions can be rewritten as
 \begin{subequations}
\label{eq:eq_cond_casc}
\begin{gather}
\begin{align}
 y_p &= B_p(B_{p'}(y_p)), \\
 y_{p'} &= B_{p'}(B_{p}(y_{p'})).
\end{align}
\end{gather}
\end{subequations}
Equations~\eqref{eq:eq_cond_casc} indicate that $\bb{y}$ is an equilibrium if and only if for every $p \neq p' \in P$, $y_p$ is a fixed point for $B_p\left(B_{p'}(.)\right)$. Thereby, $(y_p, y_{p'})$ is an equilibrium for $\tilde{G}$ if and only if $B_p(B_{p'}(.))$ intersects the line going through the origin with slope 1 at $y_p$, and $B_{p'}(B_p(.))$ intersects the line going through the origin with slope 1 at $y_{p'}$. In the remainder, we prove that under~\eqref{eq:small_der}, the slope of $B_p(B_{p'}(.))$ is always positive and smaller than 1 for every $p \neq p' \in P$. Therefore, $B_p(B_{p'}(.))$ can intersect the identity line at most once. Thus, knowing that there exits at least one equilibrium, we can conclude that $\tilde{G}$ and therefore $G$ always has a unique equilibrium if~\eqref{eq:the_nash_cond} and~\eqref{eq:small_der} hold. To prove this, it suffices to show that $ 0 \leq\frac{d B_p (y_{p'})}{d y_{p'}} \leq 1$, for every $p\neq p' \in P$. 
%\begin{figure}
%	\centering
	%\includegraphics[width=3.5in]{x_a_1_model.png}
	%\caption{The fraction of altering vehicles, $x_1^a$, predicted by our model versus the values measured from simulation.}
	%\label{fig:twopaths}
%\end{figure}
To see this, let $x_j^s$ be such that 
$J_i^s(x_i^s, x_j^s)$ and $J_i^b(x_i^s, x_j^s)$ intersect each other at $\bar{x}_i^s(x_j^s) \in [0,f_i]$. Using~\eqref{eq:J_i^s} and the fact that for every $ i \in I$, $x_i^b = f_i - x_i^s$, we have
\begin{align}\label{eq:11}
\frac{\partial J_i^s (x_i^s, x_j^s)}{\partial x_j^s} &= -C^t_i - C^c_i (f_i - x_i).
\end{align}
Since $(f_i - x_i) \geq 0$, we can conclude that $\frac{\partial J_i^s (x_i^s, x_j^s)}{\partial x_j^s} \leq 0$.
Similarly, we can compute
\begin{align}\label{eq:22}
\frac{\partial J_i^b (x_i^s, x_j^s)}{\partial x_j^s} &= C^t_j + C_j^c (f_j - x_j^s) - C_j^c(x_j^s + f_i - x_i^s).
\end{align}
Since $(f_j - x_j^s) \geq 0$, it is easy to see that if~\eqref{eq:the_nash_cond} holds, $\frac{\partial J_i^b (x_i^s, x_j^s)}{\partial x_j^s}$ is always positive. Thus,~\eqref{eq:11} and~\eqref{eq:22} imply that as $x_j^s$ increases, $J_i^s$ decreases while $J_i^b$ increases. Therefore, as $x_j^s$ increases, $\bar{x}_i^s (x_j^s)$ can only increase. However, Remark~\ref{rem:eq_inter} implies that if $\bar{x}_i^s (x_j^s)$ lies outside the interval $[0,f_i]$, it is projected on this interval. Since $x_j$ varies on interval $[0,f_j]$, interval $[0,f_j]$ can be divided into $[0,f_j] = [0,m_j] \cup [m_j,n_j] \cup [n_j,f_j]$, such that $\bar{x}_i^s (x_j^s)$ is always 0 for $x_j^s \in [0,m_j]$, and always 1 on for $x_j^s \in [n_j,f_j]$. Note that either of the intervals $[0,m_j]$, $[m_j,n_j]$ and $[n_j,f_j]$ can possibly be empty. Hence, in order to show that the slope of the best response function $B_p(x_j^s)$ is always smaller than $1$, it suffices to prove that it is indeed less than 1 for $x_j^s$ being in interval $[m_j,n_j]$ where $J_i^s(x_i^s, x_j^s)$ and $J_i^b(x_i^s, x_j^s)$ do intersect at $\bar{x}_i^s \in [0,f_i]$.

%$\bar{x}_i^s (x_j^s)$

%~\eqref{eq:jll_der}, ...

For a given $x_j^s \in [m_j,n_j]$, $\bar{x}_i^s(x_j^s)$ must satisfy
\begin{align*}
J_i^s(\bar{x}_i^s, {x}_j^s) - J_i^b(\bar{x}_i^s, {x}_j^s) =0.
\end{align*}
Therefore, using implicit differentiation, $\frac{ d\bar{x}_i(x_j^s)}{d x_j}$ can be computed via

\begin{equation}\label{eq:impl_diff}
%\begin{aligned}
\frac{\partial}{\partial x_i^s }\left( J_i^s(\bar{x}_i^s,{x}_j^s) - J_i^b(\bar{x}_i^s,{x}_j^s) \right) \frac{ d\bar{x}^s_i(x_j^s)}{d x^s_j} + \frac{\partial}{\partial x_j^s}\left( J_i^s(\bar{x}_i^s,{x}_j^s) - J_i^b(\bar{x}_i^s,{x}_j^s) \right) = 0. 
%\end{aligned}
\end{equation}

%\begin{figure}
%	\centering
%	\includegraphics[width=3.5in]{x_a_2_model.png}
%	\caption{The fraction of altering vehicles, $x_2^a$, predicted by our model versus the values measured from simulation.}
%	\label{fig:twopaths}
%\end{figure}

\noindent Using~\eqref{eq:J_i^s} and~\eqref{eq:J_i^v} , and the fact that $x_i^b = f_i - x_i^s$ for all exit links $i \in I$, we have
\begin{equation}\label{eq:diff}
%\begin{aligned}
\frac{\partial}{\partial x_j^s}\left( J_i^s(\bar{x}_i^s,{x}_j^s) - J_i^b(\bar{x}_i^s,{x}_j^s) \right) = -C_i^t - C_i^c (f_i - x_i^s)-
%\\
%& \qquad \qquad \qquad \qquad 
C_i^c + C_j^c (x_j^s + f_i - x_i^s) - C_j^c (f_j - x_j^s).
%\end{aligned}
\end{equation}

\noindent Since~\eqref{eq:diff} is linear in $x_i^s$ and $x_j^s$, its maximum and minimum are attained at its extreme points. It is easy to check that the maximum possible value for~\eqref{eq:diff} is $-C^t_i - C^t_j + C_j^c$. If~\eqref{eq:the_nash_cond} holds, $-C^t_i - C^t_j + C_j^c \leq 0$. Therefore $\frac{\partial}{\partial x_j^s}\left( J_i^s(\bar{x}_i^s,{x}_j^s) - J_i^b(\bar{x}_i^s,{x}_j^s) \right) \leq 0$ under~\eqref{eq:the_nash_cond}. Using the same approach, one can verify that under~\eqref{eq:the_nash_cond}, it always the case that for every exit link $i \in I$,
 \begin{align*}
 \frac{\partial}{\partial x_i^s} \left( J_i^s(\bar{x}_i^s,{x}_j^s) - J_i^b(\bar{x}_i^s,{x}_j^s) \right)  \geq 0.
 \end{align*}
 Hence, using~\eqref{eq:impl_diff}, under~\eqref{eq:the_nash_cond}, 
 $$\frac{ d\bar{x}^s_i(x_j^s)}{d x_j^s} \geq 0, \quad \forall i\neq j \in I.$$
 
\begin{figure*}
    \centering
    \begin{subfigure}[b]{0.48\textwidth}
       \centering \includegraphics[width=\textwidth]{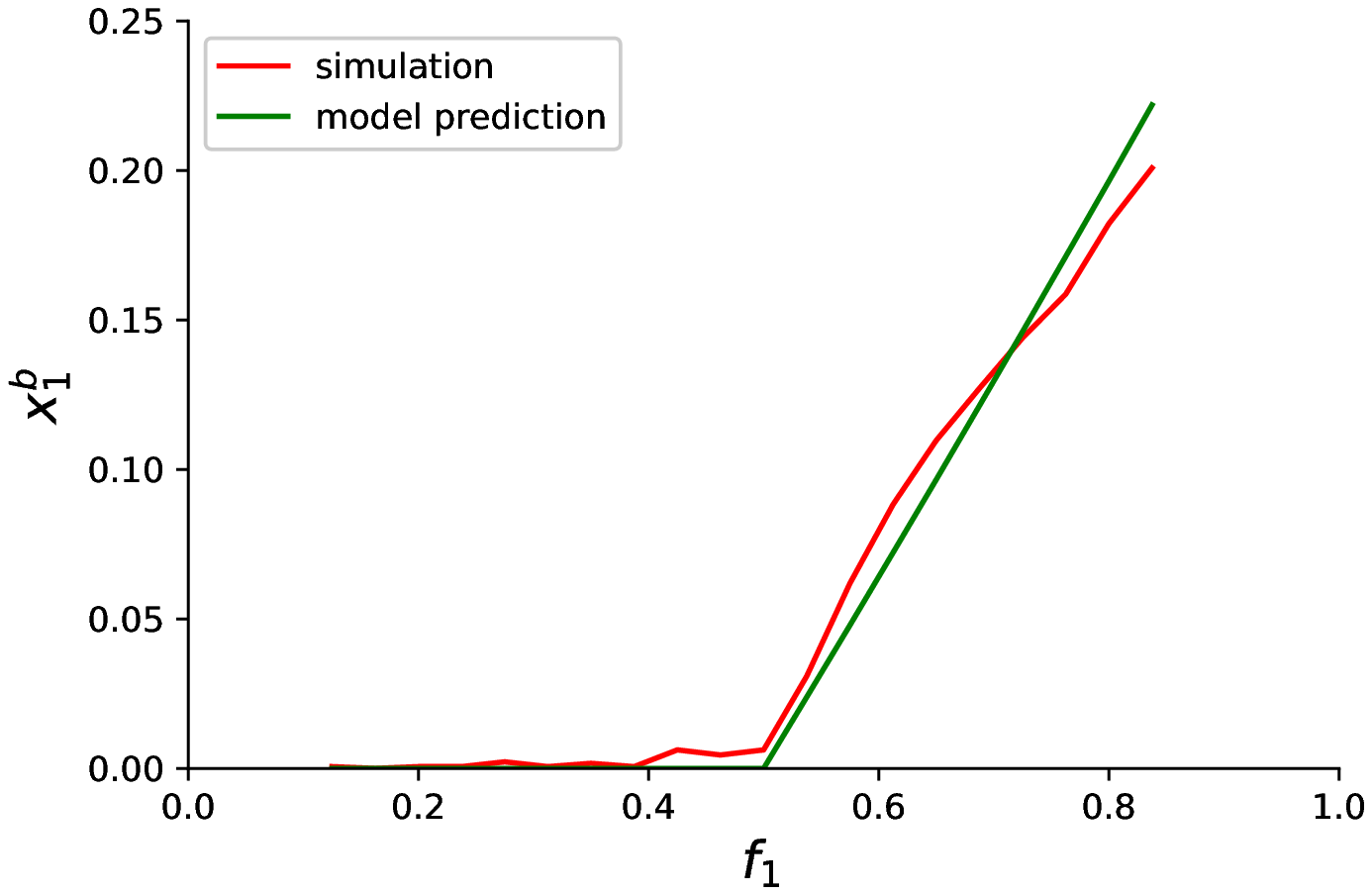}
        \caption{$x_1^b$}
    \end{subfigure}\quad 
    ~ %add desired spacing between images, e. g. ~, \quad, \qquad, \hfill etc. 
      %(or a blank line to force the subfigure onto a new line)
      \begin{subfigure}[b]{0.48\textwidth}
       \centering \includegraphics[width=\textwidth]{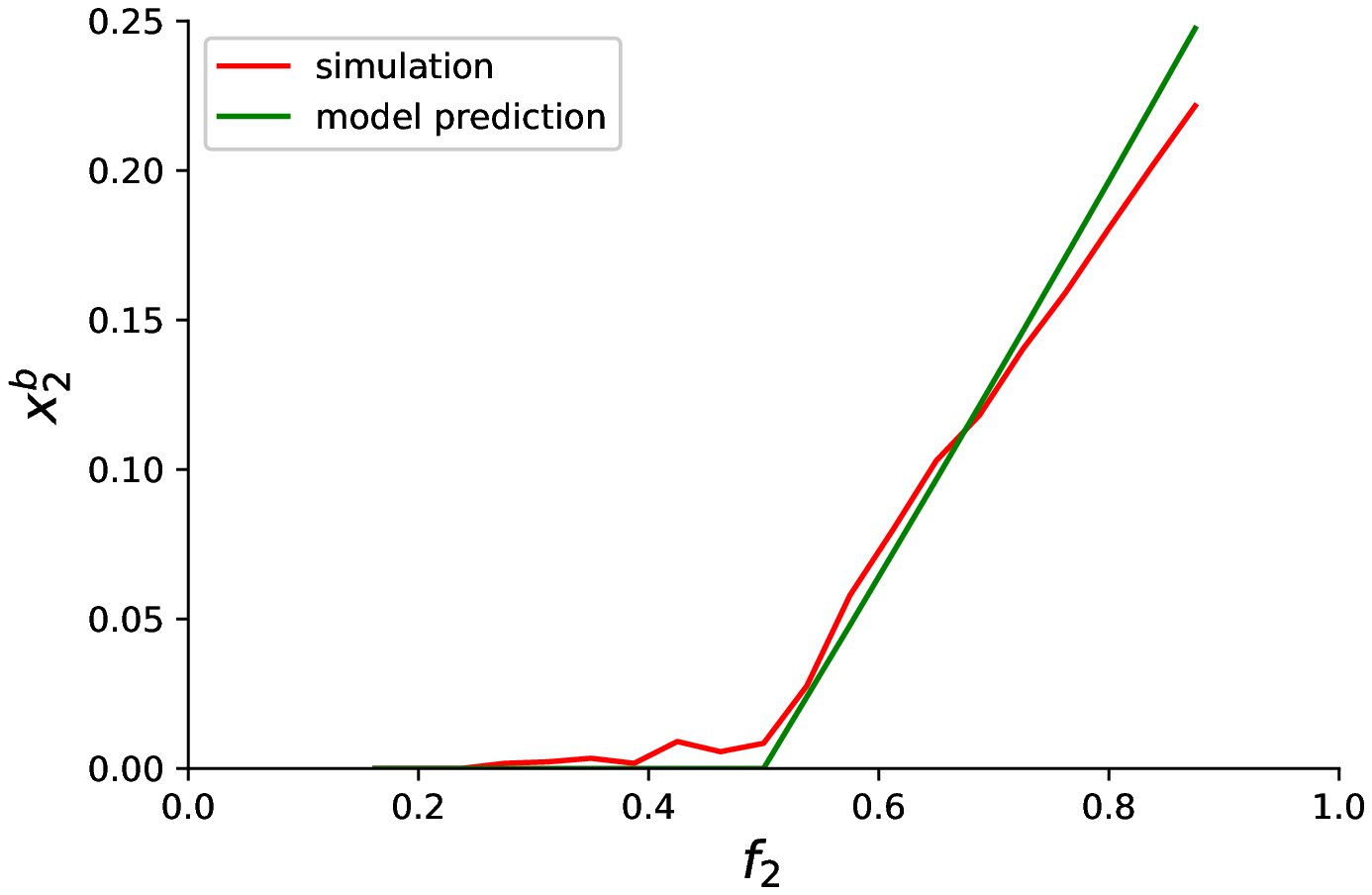}
        \caption{$x_2^b$}
    \end{subfigure}
    \caption{The fraction of bypassing vehicles, $x_i^b$, predicted by our macroscopic model and the values measured from microscopic simulations as a function of demand fractions $f_i$'s.}\label{fig:model_pred}
\end{figure*}
 
\noindent Now that we have shown that the slope of the best response function is always positive, it only remains to prove that $\frac{ d\bar{x}^s_i(x_j^s)}{d x_j^s} \leq 1$. To prove this, it suffices to show that
\begin{equation}\label{eq:diff_eval}
%\begin{aligned}
\frac{\partial}{\partial x_i^s} ( J_i^s(\bar{x}_i^s,{x}_j^s) - J_i^b(\bar{x}_i,{x}_j^s) ) \geq -\left(\frac{\partial}{\partial x_j^s}\left( J_i^s(\bar{x}_i^s,{x}_i^s) - J_i^b(\bar{x}_i^s,{x}_j^s) \right) \right).
%\end{aligned}
\end{equation}

\noindent Substituting~\eqref{eq:J_i^s},~\eqref{eq:J_i^v}, and~\eqref{eq:diff} in~\eqref{eq:diff_eval} and computing the linear function at its extreme points, we observe that~\eqref{eq:small_der}
is a sufficient condition for~\eqref{eq:diff_eval} which completes our proof.
\end{proof}

%%%%%%%%%%%%%%%%%%%%%%%%%%%%%%%%%%%%%%%%%%%%%%%%%%%%%%%%%%%%%%%%%%%%%%%%%%%%%%%%

\section{Simulation Results} \label{sec:model_val}

%\begin{figure}
%	\centering
%	\includegraphics[width=3.5in]{x_a_1_model.png}
%	\caption{The fraction of altering vehicles, $x_1^a$, predicted by our model versus the values measured from simulation.}
%	\label{fig:model_pred}
%\end{figure}

Up to now, we have described our model and its properties. In this section, we describe how our simulation results indicate that our model can successfully predict the bypassing behavior of vehicles. A key element of our model which affects its functionality is the coefficient vector $\bf{C}$. Therefore, in order to study the performance of the model, it needs to be calibrated first, i.e. the coefficient vector $\bf{C}$ that best fits a given diverge must be determined.

\subsection{Model Calibration}
Consider a diverge with two exit links $I = \{1,2 \}$. Fix the total flow of vehicles $d = d_1 + d_2$ that enter the diverge. For a given fixed $d$, consider different normalized demand configurations, $F^k = \{f_1^k, f_2^k \},1 \leq k \leq K$, where $K$ is the total number of possible normalized demand configurations available from the data or simulation. For each value of $f_1^k$ and $f_2^k = 1 - f_1^k$, measure $(x_i^s)^k$ and $(x_i^b)^k$ either from real data or simulation, where $(x_i^s)^k$ and $(x_i^b)^k$ are the fractions of steadfast and bypassing vehicles for each destination $i \in I$ when $k^{\text{th}}$ demand pattern is used. We let $\bb{x}^k$ represent the vector $\bb{x}$ measured for the $k^{\text{th}}$ demand configuration. Using our model, the vector of cost coefficients $\bb{C}$ must be determined such that Equations~\eqref{eq:eq_def} are satisfied by $(x_i^s)^k$ and $(x_i^b)^k$ for every $k \leq K$. But, since~\eqref{eq:eq_def} contains nonlinear inequalities, finding such a $\bf{C}$ is nontrivial. We propose the following calibration process for the cost parameters vector $\bf{C}$.

For every $k \leq K$ and $i \in I$, define the integer variables $(z_i^s)^k \in \{0,1 \}$, and $(z_i^b)^k \in \{0,1\}$ such that

\begin{subequations}
\label{eq:impl_cons}
\begin{gather}
\begin{align}
(x_i^s)^k (J_i^s(\mathbf{x}^k) - J_i^b(\mathbf{x}^k)) \leq 0 &\Longleftrightarrow (z_i^s)^k = 0 \\
(x_i^s)^k (J_i^s(\mathbf{x}^k) - J_i^b(\mathbf{x}^k)) > 0 &\Longleftrightarrow (z_i^s)^k = 1 \\
(x_i^b)^k (J_i^b(\mathbf{x}^k) - J_i^s(\mathbf{x}^k)) \leq 0 &\Longleftrightarrow (z_i^b)^k = 0 \\
(x_i^b)^k (J_i^b(\mathbf{x}^k) - J_i^s(\mathbf{x}^k)) > 0 &\Longleftrightarrow (z_i^b)^k = 1
\end{align} 
\end{gather}
\end{subequations}

%\begin{figure}
 %   \centering
  %  \begin{subfigure}[b]{0.4\textwidth}
        %\includegraphics[width=\textwidth]{x_a_1_model.png}
 %       \caption{$x_1^s$.}
  %      \label{fig:gull}
   % \end{subfigure}
    ~ %add desired spacing between images, e. g. ~, \quad, \qquad, \hfill etc. 
      %(or a blank line to force the subfigure onto a new line)
    %\begin{subfigure}[b]{0.4\textwidth}     \includegraphics[width=\textwidth]{x_a_2_model.png}
     %   \caption{$x_2^s$.}
      %  \label{fig:tiger}
    %\end{subfigure}
    ~ %add desired spacing between images, e. g. ~, \quad, \qquad, \hfill etc. 
    %(or a blank line to force the subfigure onto a new line)
    %\caption{The fraction of alternating vehicles predicted by the model $x_i^a$ versus the values obtained from simulation as a function $f_i$.}\label{fig:model_pred}
%\end{figure}

\noindent Then, letting $\mathbf{z}$ be the vector of $(z_i^s)^k$ and $(z_i^b)^k$ for all $k$'s and all exit links $i \in I$, we propose to solve the following optimization problem for calibrating $\bb{C}$.

\begin{equation}\label{eq:cal_opt}
\begin{aligned}
& \underset{\bf{C}, \bf{z}}{\text{minimize}}
& & \sum_{k \in K} \sum_{i \in I} \left((z_i^s)^k + (z_i^b)^k \right) \\
& \text{subject to}
& & \text{Equations}~\eqref{eq:impl_cons} \\
&&& \bb{C}_r \geq 1,
\end{aligned}
\end{equation}
where $\bb{C}_r$ is the $r_{\text{th}}$ element of $\bb{C}$. We use the constraint $\bb{C}_r \geq r$ to avoid the optimizer from setting all the elements of $\bb{C}$ to be zero. It is important to note that since in~\eqref{eq:J_i^s} and~\eqref{eq:J_i^v}, every term is multiplied by one and only one element of $\bf{C}$, and multiplying all cost functions by the same constant does not change the Wardrop conditions, scaling $\bb{C}$ by a single number will not affect the model. Therefore, this constraint does not affect the model. Note that for every inequality constraint that is violated in~\eqref{eq:cal_opt}, the cost is increased by 1. Thus,~\eqref{eq:cal_opt} penalizes for not satisfying~\eqref{eq:eq_def} which are the equilibrium conditions. But, how can the optimization problem~\eqref{eq:cal_opt} be solved where the constraints are of the form~\eqref{eq:impl_cons}? To answer this, we use the procedure introduced in~\cite{raman2014model}. Let $M$ be a large positive number, and $\epsilon$ be a small positive number close to zero. For every $k$, the following is equivalent to~\eqref{eq:impl_cons}.

\begin{subequations}\label{eq:simp_impl}
\begin{gather}
\begin{align}
(x_i^s)^k (J_i^s(\mathbf{x}^k) - J_i^b(\mathbf{x}^k)) &\leq M (z_i^s)^k - \epsilon,  \\
-(x_i^s)^k (J_i^s(\mathbf{x}^k) - J_i^b(\mathbf{x}^k)) &\leq M (1-z_i^s) - \epsilon,  \\
(x_i^b)^k (J_i^b(\mathbf{x}^k) - J_i^s(\mathbf{x}^k)) &\leq M (z_i^b)^k - \epsilon, \\
-(x_i^b)^k (J_i^b(\mathbf{x}^k) - J_i^s(\mathbf{x}^k)) &\leq M (1-z_i^b) - \epsilon.
\end{align} 
\end{gather}
\end{subequations}
Therefore, our model can be calibrated by solving the following optimization problem
\begin{equation}\label{eq:cal_opt_final}
\begin{aligned}
& \underset{\bf{C}}{\text{minimize}}
& & \sum_k \sum_{i \in I} \left((z_i^s)^k + (z_i^b)^k \right) \\
& \text{subject to}
& & \text{Equations}~\eqref{eq:simp_impl} \\
&&& \bb{C}_j \geq 1.
\end{aligned}
\end{equation}
Note that~\eqref{eq:cal_opt_final} is now a mixed--integer linear program that can be easily solved using optimization packages. Since~\eqref{eq:cal_opt_final} is solved offline, and, further, the number of required integer variables is small, the computational cost for solving~\eqref{eq:cal_opt_final} is not overtaxing to calibrate our model.  

\subsection{Model Validation}
Consider the diverge shown in Figure~\ref{fig:diverge}. We used the microscopic traffic simulator SUMO~\cite{SUMO2012} to simulate the traffic behavior at the diverge of Figure~\ref{fig:diverge} for different demand configurations. 
A total flow of $d = 3000 \frac{\text{veh}}{\text{hour}}$ enters the diverge. The capacity of each lane is $930 \frac{\text{veh}}{\text{hour}}$. At every simulation, a fraction of vehicles $f_1$ is assumed to take the exit link 1 while the remaining fraction of vehicles $f_2 = 1-f_1$ is assumed to take the exit link 2. For different values of $f_1$, $x_1^s$, $x_1^b$, $x_2^s$, and $x_2^b$ are measured. Then, this data set is used to calibrate the model, i.e. finding the cost parameter vector $\bb{C}$ that best fits the data by solving optimization problem~\eqref{eq:cal_opt_final}. Since our road geometry is symmetric, we introduced the additional constraints that $C_1^t = C_2^t$, $C_1^c = C_2^c$, and $\gamma_1 = \gamma_2$ in~\eqref{eq:cal_opt_final}, and obtained the following values for $\bb{C}$.
\begin{align*}
C_1^t = C_2^t = 1, \; C_1^c = C_2^c = 1,\; \gamma_1 = \gamma_2 = 2.7.
\end{align*}
Notice that the obtained values of $\bb{C}$ satisfy~\eqref{eq:the_nash_cond} and~\eqref{eq:small_der}; thus, in every scenario, Theorem~\ref{the:uniq} implies that there exits only one equilibrium. We used a total of 20 fractional demand configurations; thus, a total of 40 integer variables.
The objective function of~\eqref{eq:cal_opt_final} was 4 when fitting $\bb{C}$, meaning that only 4 inequalities were unsatisfied among the 40 inequality constraints of our data set.

\begin{figure*}
    \centering
    \begin{subfigure}[b]{0.48\textwidth}
       \centering \includegraphics[width=\textwidth]{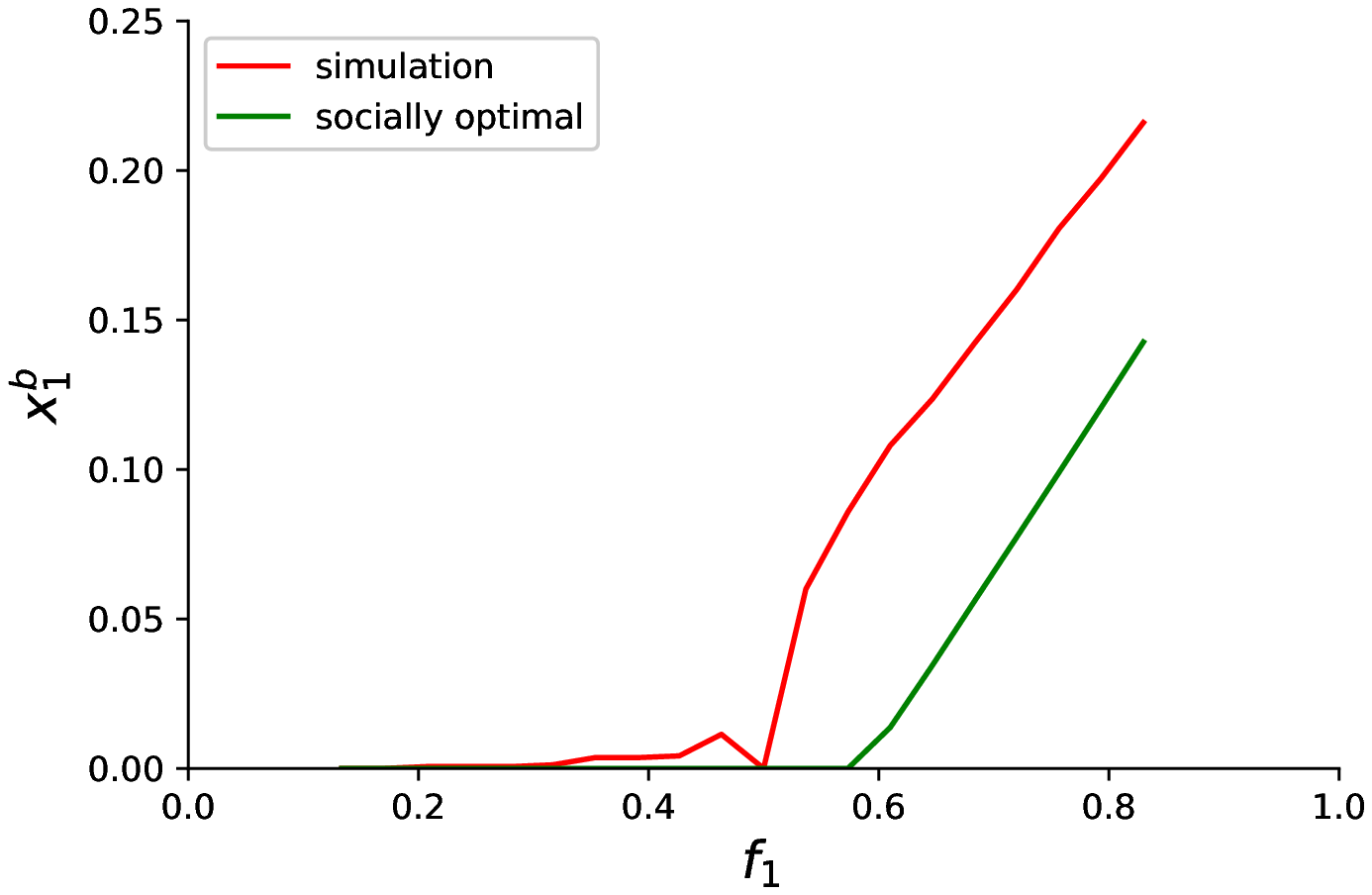}
        \caption{$x_1^b$}
    \end{subfigure}\quad
    ~ %add desired spacing between images, e. g. ~, \quad, \qquad, \hfill etc. 
      %(or a blank line to force the subfigure onto a new line)
      \begin{subfigure}[b]{0.48\textwidth}
       \centering \includegraphics[width=\textwidth]{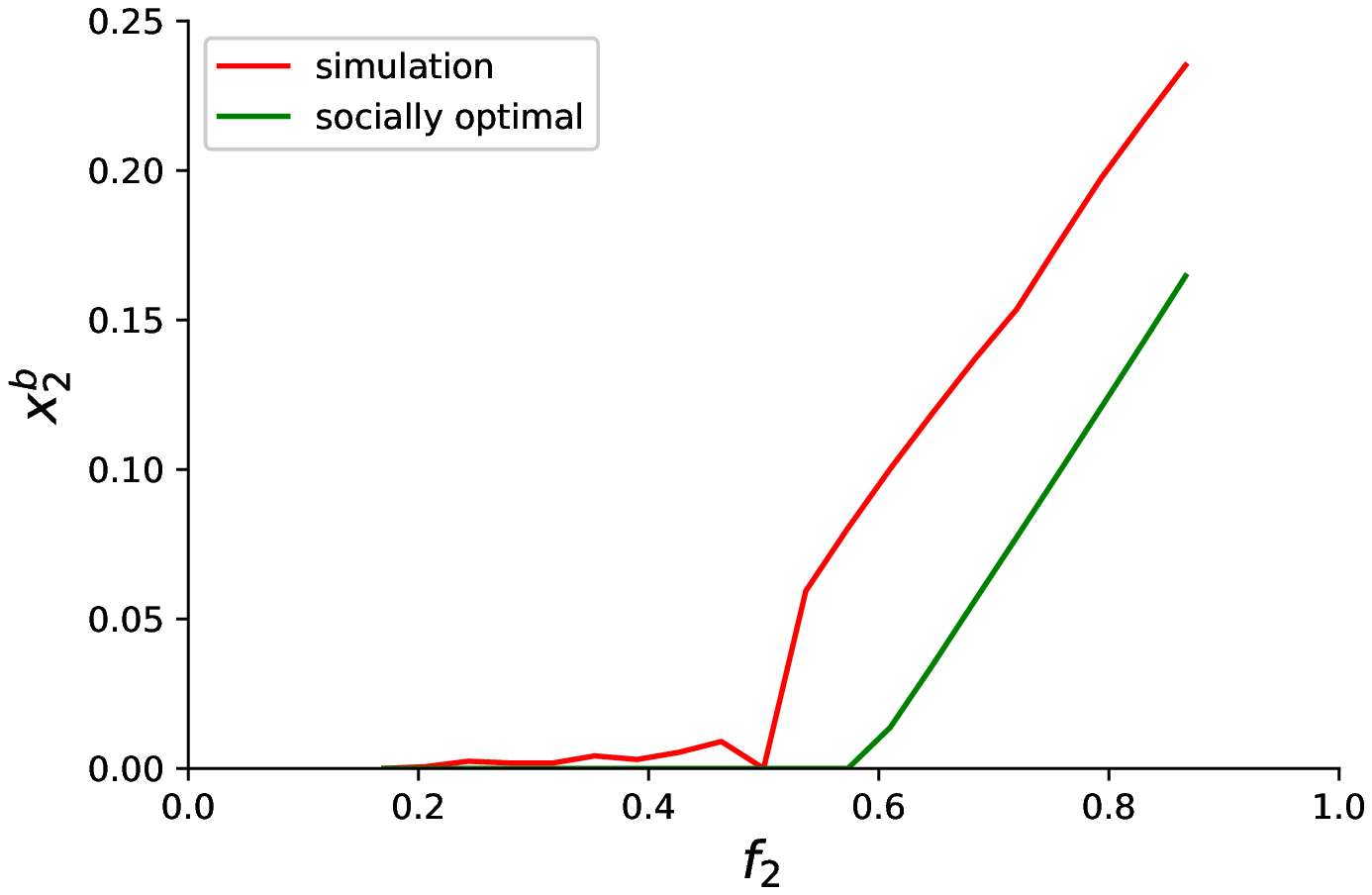}
        \caption{$x_2^b$}
    \end{subfigure}    
    \caption{The fraction of bypassing vehicles, $x_i^b$, from simulation and the values of bypassing vehicles required for the social optimality as a function of $f_i$.}\label{fig:social}
\end{figure*}

With the calibrated $\bb{C}$, we used our model to predict $x_1^s$, $x_1^b$, $x_2^s$, and $x_2^b$ for the scenarios where the total flow entering the diverge is different from the value that we used to calibrate our macroscopic diverge. Figure~\ref{fig:model_pred} demonstrates such a study when the total demand to the diverge was $d = 2500 \frac{\text{veh}}{\text{hour}}$. Figure~\ref{fig:model_pred} shows both the SUMO microsimulation results and our model predictions values as a function demand configuration fractions $f_i$'s. As Figure~\ref{fig:model_pred} shows, our macroscopic model accurately predicts the fraction of bypassing and steadfast vehicles for each destination. Notice that when the demand for exit $1$ is low $f_1 \leq 0.5$, none of the vehicles who aim to take the exit 1 would take the more crowded lane II; therefore, $x_1^b \simeq 0$. But, with the increase of $f_1$, vehicles will take lane II since it will reduce their cost. Our simulation results indicate that our model is capable of predicting , with great accuracy, the behavior of the vehicles. We obtained similar results when the total demand that enter the diverge was varied.
%%%%%%%%%%%%%%%%%%%%%%%%%%%%%%%%%%%%%%%%%%%%%%%%%%%%%%%%%%%%%%%%%%%%%%%%%%%%%%%%

\section{Socially Optimal Lane Change Behavior}\label{sec:soc_opt}

Having shown that our model can be used to accurately predict macroscopic vehicular behavior, we can deploy it for further analysis. Intuitively, one might argue that if most vehicles were less selfish, and would have choose their destination lane far upstream of the diverge, the total travel cost experienced by vehicles at the diverge would be reduced. We now show that our model provides a powerful framework for analytically studying this conjecture. Assume that there is a central authority which can dictate the lanes that a vehicle must travel on such that the total vehicle cost at the diverge is minimum; or equivalently, that the social optimum is achieved. The total or social cost experienced by the vehicles can be computed using our desired macroscopic model as follows:

\begin{align}
J_{\text{soc}} = \sum_{i \in I} \left( x_i^s J_i^s + x_i^a J_i^b \right).
\end{align}

Then, the minimum social cost can be determined by solving the following optimization problem 

\begin{equation}\label{eq:social_opt}
\begin{aligned}
& \underset{\bb{x}}{\text{minimize}}
& & J_{\text{soc}} \\
& \text{subject to}
& & x_i^s + x_i^b = f_i, \quad \forall i \in I, \\
&&& x_i^s \geq 0, \; x_i^b \geq 0, \quad \forall i \in I.
\end{aligned}
\end{equation}
Optimization~\eqref{eq:social_opt} can be solved to find the optimal lane choice and bypassing behavior. Note that in~\eqref{eq:social_opt}, the decision variables are $x_i^s, x_i^b$, for every $i \in I$; thus, the objective function~\eqref{eq:social_opt} is a 3rd order polynomial in the decision variables. Optimization problem~\eqref{eq:social_opt} can be easily solved using commercial solvers. This simplicity should be contrasted to the existing methods, where strategies for finding better lane choices are heuristically determined through simulation studies.

Using the $\bb{C}$ that was obtained from our model calibration, we solved~\eqref{eq:social_opt} for the case when the total flow entering the diverge was $3000 \frac{\text{veh}}{\text{hour}}$.  Figure~\ref{fig:social} demonstrates the socially optimal bypassing of vehicles. As Figure~\ref{fig:social} shows, for every fractional demand configuration, at equilibrium, since vehicles choose their lanes selfishly, the number of bypassing vehicles is larger than the optimal one. Moreover, as Figure~\ref{fig:social} suggests, a key observation is that the optimal lane choice is \emph{not} preventing all vehicles from bypassing. Therefore, the socially optimal lane choice is  often in between the Wardrop equilibrium and zero lane change. Our model allows for quantitatively analyzing this trade off, which we believe has not been previously captured in the literature.

%\begin{figure}
%	\centering
%	\includegraphics[width=3.5in]{x_2_a_social.png}
%	\caption{The fraction of altering vehicles, $x_1^a$, at simulation versus the values required for social optimality.}
%	\label{fig:social}
%\end{figure}

%%%%%%%%%%%%%%%%%%%%%%%%%%%%%%%%%%%%%%%%%%%%%%%%%%%%%%%%%%%%%%%%%%%%%%%%%%%%%%%%

\section{Mixed Autonomy Setting}\label{sec:mixed_aut}

\begin{figure*}
    \centering
    \begin{subfigure}[b]{0.47\textwidth}
       \centering \includegraphics[width=\textwidth]{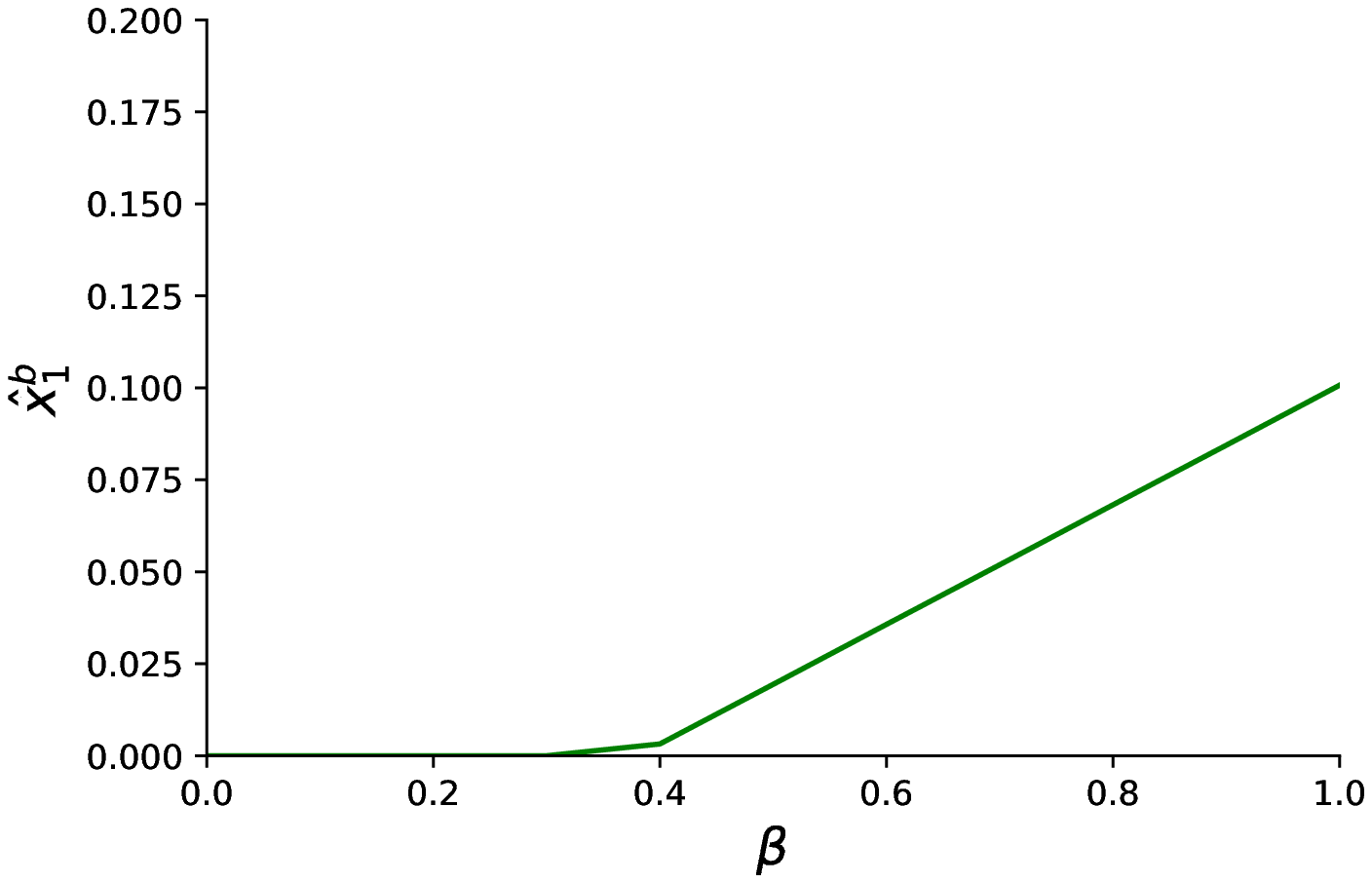}
        \caption{$\alpha = \% 25$}
    \end{subfigure}%\quad
    ~ %add desired spacing between images, e. g. ~, \quad, \qquad, \hfill etc. 
      %(or a blank line to force the subfigure onto a new line)
      %\quad
      \begin{subfigure}[b]{0.47\textwidth}
       \centering \includegraphics[width=\textwidth]{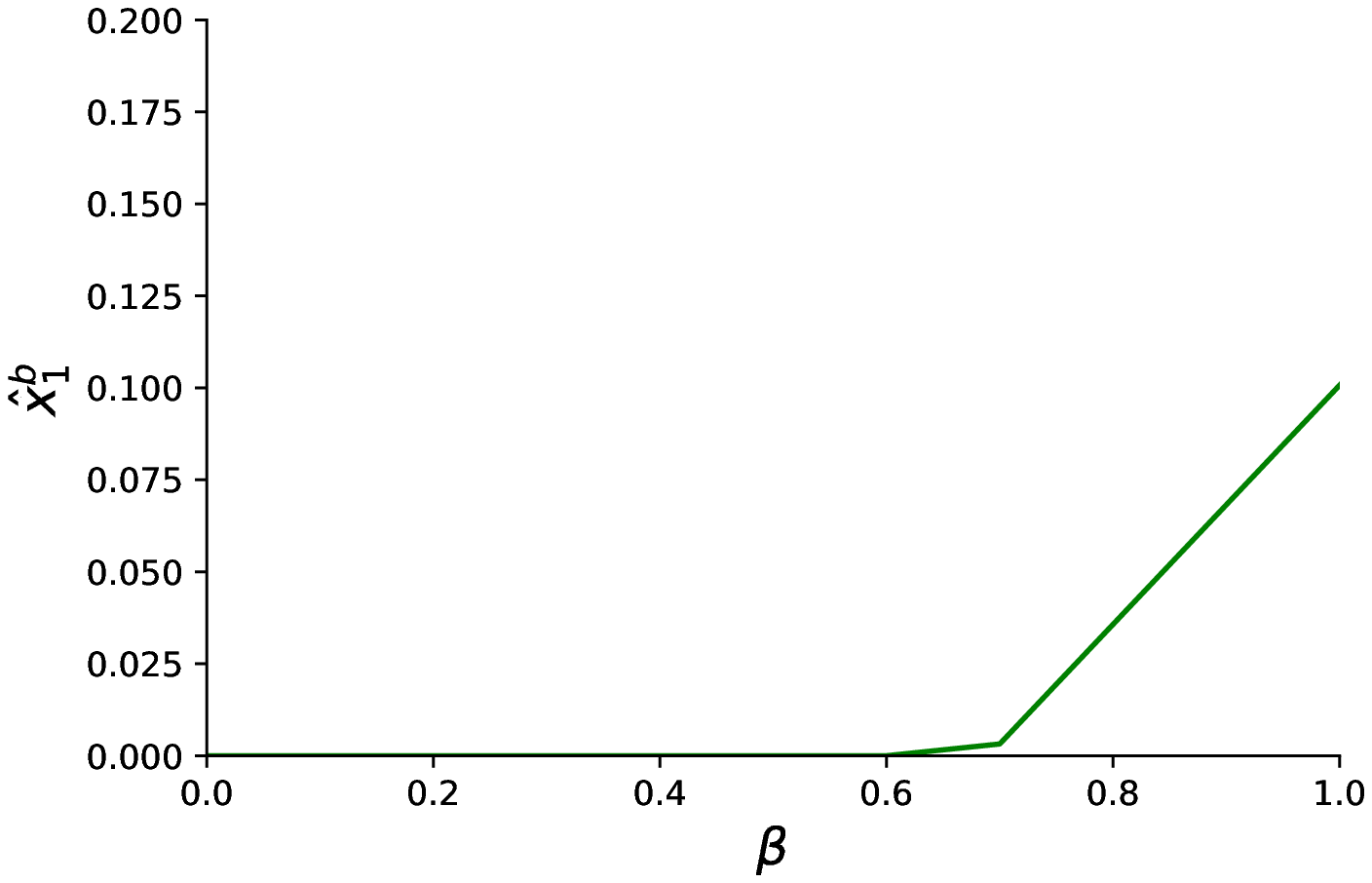}
        \caption{$\alpha = \% 50$}
    \end{subfigure}    
    %\begin{subfigure}[b]{0.33\textwidth}
     %  \centering \includegraphics[width=\textwidth]{socialcost_beta_75.eps}
      %  \caption{$\alpha = \% 75$}
    %\end{subfigure}
    
    \caption{The fraction of bypassing vehicles versus the fraction of vehicles that are commanded to remain steadfast $\beta$ for different values of autonomy fraction $\alpha$. }   \label{fig:xhat}
\end{figure*}

The significance of the mathematical macroscopic model that we derived in this paper is that it allows traffic engineers to perform further analysis, as well as to derive socially optimal traffic management policies. In particular,
a novel and important use of our model becomes apparent if we assume that a central authority has control over a fraction of vehicles, and is able to to dictate control actions and route choices to the compliant vehicles such that, when the noncompliant vehicles react to the actions of the compliant vehicles, the overall network performance is improved. In the context of routing games, such control mechanisms are known as Stackelberg routing of vehicles. Examples of such scenarios include traffic networks with mixed autonomy. As autonomous vehicles are becoming tangible technologies, it is expected that in near future both regular and autonomous vehicles will coexist in traffic networks. It was shown in~\cite{mehr2018can} that only replacing regular vehicles by autonomous vehicles might not be enough for improving network mobility, and further controlling mechanisms are required to exploit the mobility benefits of autonomous vehicles. Therefore, in order to increase the mobility of traffic networks with mixed autonomy, lane choice decisions for autonomous vehicles must be developed such that they lead to a decrease of the social cost. In this section, we use our model to study how the lane choice behavior of autonomous vehicles in the mixed autonomy setting can change the social cost at a traffic diverge with two exit links through an example.

Consider the traffic diverge shown in Figure~\ref{fig:diverge}. Assume the model has been calibrated with the cost parameter vector $\mathbf{C}$. Fix the total demand $d$ and the normalized demand configuration $F =\{ f_1 , f_2 \}$. Let $\alpha$ be the fraction of vehicles that are autonomous among all vehicles that wish to take exit I. We also assume in this example that all vehicles that take exit II are regular. This implies that $\alpha f_1$ percent of the total vehicles are autonomous. We also assume that the central authority is able to command $\beta$ fraction of autonomous vehicles to be steadfast vehicles, and the remaining $(1-\beta)$ fraction of autonomous vehicles to be bypassing vehicles i.e. $(1-\beta)$ fraction of autonomous vehicles are commanded to bypass at the diverge to take their exit link, and the remaining $\beta$ fraction of the autonomous vehicles are commanded to choose lane I far upstream of the diverge and remain on this lane. Let $w=(1-\beta) \alpha f_1$ and $z = \beta \alpha f_1$ denote the fractions with respect to the total demand of vehicles that are commanded to bypass and remain on lane I respectively. For fixed $w$ and $z$, the remaining vehicles react such that a new equilibrium is achieved. Thus, every choice of $w$ and $z$ induces a new Wardrop equilibrium. For each exit link $i \in I$, we use  $\hat{x}_i^s$ and $\hat{x}_i^b$ to represent the fraction of steadfast and bypassing vehicles in the induced equilibrium. Note that in this case, for a given $w$ and $z$, flow conservation requires that $\hat{x}_1^s + \hat{x}_1^b = f_1 - w - z$ and $\hat{x}_2^s + \hat{x}_2^b = f_2 $. Let $\hat{\mathbf{x}} = (\hat{x}_i^s, \hat{x}_i^b : i \in I)$ represent the vector of flows. In this case, the modified cost of steadfast and bypassing vehicles are

\begin{align*}
\hat{J}_1^s(\hat{\mathbf{x}}) &= C_1^t \left( \hat{x}_1^s + \hat{x}_2^b + z \right) + C_1^c (\hat{x}_1^b+w) \left ( \hat{x}_1^s + z + \hat{x}_2^b \right),\\
\hat{J}_1^b(\hat{\mathbf{x}}) &= C_2^t \left(\hat{x}_2^s + \gamma_1 (\hat{x}_1^b + w) \right) + C_2^c \hat{x}_2^b \left(  \hat{x}_2^s + \hat{x}_1^b + w \right), \\
\hat{J}_2^s(\hat{\mathbf{x}}) &= C_2^t \left( \hat{x}_2^s + \hat{x}_1^b + w \right) + C_2^c \hat{x}_2^b \left ( \hat{x}_2^s + \hat{x}_1^b + w \right),\\
J_2^b(\hat{\mathbf{x}}) &= C_1^t \left(\hat{x}_1^s + z + \gamma_2 \hat{x}_2 ^b \right) + C_1^c (\hat{x}_1^b + w) \left(  \hat{x}_1^s + z + \hat{x}_2^b \right). 
\end{align*}

\begin{figure*}
    \centering
    \begin{subfigure}[b]{0.47\textwidth}
       \centering \includegraphics[width=\textwidth]{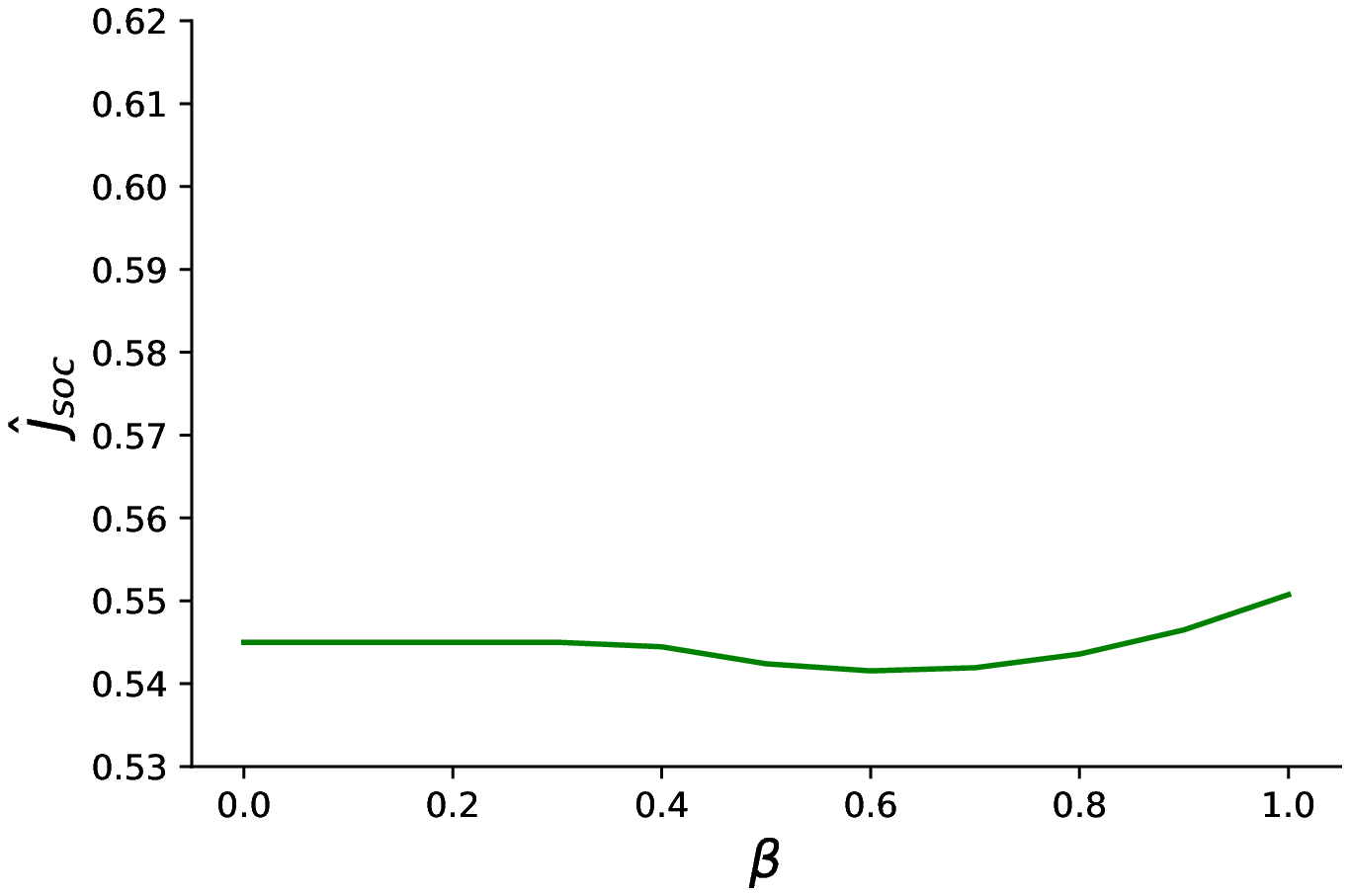}
        \caption{$\alpha = \% 25$}
    \end{subfigure}%\quad
    ~ %add desired spacing between images, e. g. ~, \quad, \qquad, \hfill etc. 
      %(or a blank line to force the subfigure onto a new line)
      %\quad
      \begin{subfigure}[b]{0.47\textwidth}
       \centering \includegraphics[width=\textwidth]{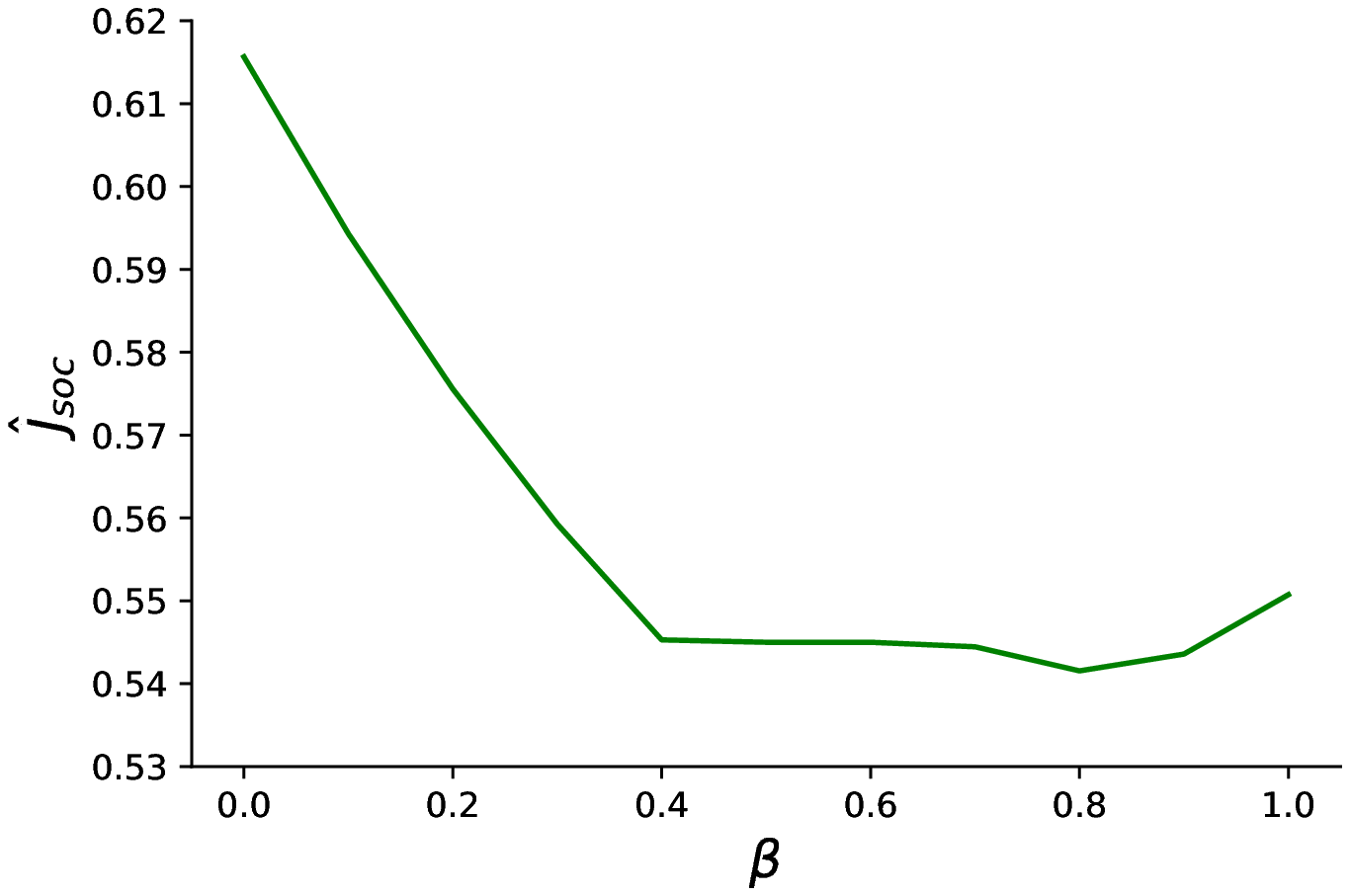}
        \caption{$\alpha = \% 50$}
    \end{subfigure}    
    %\begin{subfigure}[b]{0.33\textwidth}
     %  \centering \includegraphics[width=\textwidth]{socialcost_beta_75.eps}
      %  \caption{$\alpha = \% 75$}
    %\end{subfigure}
    
    \caption{The social cost per total demand versus the fraction of commanded bypassing vehicles $\beta$ for different autonomy fractions $\alpha$. }   \label{fig:social_red}
\end{figure*}

\noindent Using these modified cost functions, a flow vector $\hat{\mathbf{x}}$ is an induced equilibrium if for every exit link $i \in I$, it satisfies

\begin{subequations}
\label{eq:eq_ind_def}
\begin{gather}
\begin{align}
\hat{x}_i^s (\hat{J}_i^s(\hat{\mathbf{x}}) - \hat{J}_i^b(\hat{\mathbf{x}})) &\leq 0 ,\\
\hat{x}_i^b (\hat{J}_i^b(\hat{\mathbf{x}}) - \hat{J}_i^s(\hat{\mathbf{x}})) &\leq 0. 
\end{align}
\end{gather}
\end{subequations}

\noindent Let $\hat{J}_{\text{soc}}$ be the social cost of the users in the induced equilibrium. Then, the social cost $\hat{J}_{\text{soc}}(\hat{\mathbf{x}})$ is  obtained via

\begin{align}\label{eq:soc_cost_reduced}
%\begin{split}
\hat{J}_{\text{soc}}(\hat{\mathbf{x}}) = (\hat{x}_1^s + z) \hat{J}_1^s(\hat{\mathbf{x}}) + ( \hat{x}_1^b + w  ) \hat{J}_1^b(\hat{\mathbf{x}}) +
\hat{x}_2 \hat{J}_2^s(\hat{\mathbf{x}}) +  \hat{x}_2^b  \hat{J}_2^b(\hat{\mathbf{x}}).
%\end{split}
\end{align}

To make our exposition more concrete, let the total demand be $d = 3000\, \frac{\text{veh}}{\text{hour}}$. We use the cost parameters $\mathbf{C}$ obtained from calibration. Now, fix the fractional demand configuration to be $F =\{ f_1  = 0.65, f_2 = 0.35\}$. We fix $\alpha$ and vary $\beta$ from 0 to 1. For each value of $\beta$, we computed the fractions of steadfast and bypassing vehicles in the induced equilibrium using Equations~\eqref{eq:eq_ind_def}. Figure~\ref{fig:xhat} demonstrates the predicted fractions of bypassing vehicles for different values of $\beta$. Then, using Equation~\eqref{eq:soc_cost_reduced}, we computed the resulting social cost as a function $\beta$. As it can be observed from Figure~\ref{fig:xhat}, in both cases, when $\beta = 0$, i.e. none of the  autonomous vehicles were steadfast, no bypassing was observed at the equilibrium. As $\beta$ increased to 0.4 for $\alpha = 0.25$ and $\beta$ increases to 0.7 for $\alpha = 0.5$, the vehicles started to bypass at the equilibrium. Intuitively, it means that when the fraction of commanded steadfast vehicles went beyond a threshold, the bypassing behavior of vehicles emerged. Figure~\ref{fig:social_red} plots the social cost versus $\beta$ for different values of $\alpha$. As this figure shows, when $\alpha = \% 0.25$, the minimum social cost was achieved around $\beta = \% 60$; however, when the fraction of autonomous vehicles was increased to $\alpha = \% 50$, the minimum social delay is obtained when $\beta  \in [0.8,0.9]$. Notice how the parameter $\alpha$ affects the pattern of the plots. For $\alpha = 0.25$, all values of $\beta$ lead to approximately similar values of the social cost, whereas for $\alpha = 0.5$, $\beta$ is more determinative. It is interesting to observe that the optimal social cost does not occur in scenarios when no vehicles bypass. The reason for this behavior is that when $\hat{x}^b_1 = 0$, the sacrifice in terms of the cost exerted on the commanded autonomous vehicles (which obey the commands of the central authority) was larger than the gains obtained in their lane choice. Social optimality is achieved when the improvement in the overall cost of regular vehicles minus the increases in the personal cost of commanded vehicles is maximized. Our model provides a powerful framework for performing this type of traffic analysis. In this section, we studied the impact of autonomy presence through an example. Our model can be further used to mathematically find the optimal lane choice of commanded vehicles for any given autonomous vehicles penetration rate.

%%%%%%%%%%%%%%%%%%%%%%%%%%%%%%%%%%%%%%%%%%%%%%%%%%%%%%%%%%%%%%%%%%%%%%%%%%%%%%%%

\section{Conclusion} \label{sec:future}

We provided a game theoretic framework for macroscopically modeling the aggregate bypassing of vehicles at a traffic diverge, where vehicles were assumed to be selfish. We modeled the resulting equilibrium as a Wardrop equilibrium and proved the existence and uniqueness of this equilibrium. We described how our model can be easily calibrated and demonstrated via simulation studies that our model yielded promising results. We also showed how our model can be used to determine traffic management policies that reduce the social cost in traffic networks with mixed vehicle autonomy.
For future steps, we are hoping to verify our results using real traffic data.

\section*{Acknowledgment}

This work is supported by the National Science Foundation under Grants
CPS 1446145 and CPS 1545116.

\end{document}